\newcommand{\sv}[1]{}
\documentclass{llncs}
 
\usepackage{times}

\let\phi=\varphi
\let\uplus=\oplus

\newcommand{\cP}{\mathcal{P}}

\newcommand{\cF}{\mathcal{F}}
\newcommand{\cH}{\mathcal{H}}

\newcommand{\No}{{\sc No}}

\newcommand{\algo}{{\sf Detectbd}}
\newcommand{\og}{{\sf Type}}

 
\usepackage{tikz}
\usetikzlibrary{positioning,calc}
 
\usepackage{thm-restate}
\usepackage{url,verbatim}
\usepackage{amsmath,amssymb}
\usepackage{xspace,tikz,enumitem}
\usepackage{mathrsfs,verbatim}
\usepackage{enumitem,todonotes,microtype}
\usepackage{booktabs}
\usepackage[hyperfootnotes=false]{hyperref}
\usepackage{color}
\usepackage{boxedminipage}

\usepackage[]{algorithm2e}

\newcommand{\bigoh}{\mathcal{O}}

\newcommand{\SB}{\{\,} \newcommand{\SM}{\;{|}\;} \newcommand{\SE}{\,\}}

\newcommand{\var}{\text{\normalfont{\bfseries var}}}

\newcommand{\BB}{\mathcal{B}} 
\newcommand{\CC}{\mathcal{C}}


\def\AA{{\mathcal A}}

\def\PP{{\mathcal P}}

\newcommand{\W}[1]{\text{\normalfont W}[#1]}

\newcommand{\VCSP}{\textsc{VCSP}}
\newcommand{\CSP}{\textsc{CSP}}

\newcommand{\ol}[1]{\overline{#1}}


\def\MS#1{\mbox{MSO}}

\spnewtheorem{ourfact}{Fact}{\bfseries}{\itshape}
\spnewtheorem{observation}{Observation}{\bfseries}{\itshape}
\newcommand\blfootnote[1]{%
  \begingroup
  \renewcommand\thefootnote{}\footnote{#1}%
  \addtocounter{footnote}{-1}%
  \endgroup
}


\let\doendproof\endproof
\renewcommand\endproof{~\hfill$\qed$\doendproof}

\begin{document}

\title{Backdoors to Tractable Valued CSP}

\author{Robert Ganian \and M. S. Ramanujan \and Stefan Szeider}

\institute{Algorithms and Complexity Group, TU Wien, Vienna, Austria}

\maketitle

\begin{abstract}
  We extend the notion of a strong backdoor from the CSP setting to
  the Valued CSP setting (VCSP, for short).  This provides a means for
  augmenting a class of tractable VCSP instances to instances that are
  outside the class but of small distance to the class, where the
  distance is measured in terms of the size of a smallest backdoor.
  We establish that VCSP is fixed-parameter tractable when
  parameterized by the size of a smallest backdoor into every tractable
  class of VCSP instances characterized by a (possibly infinite)
  tractable valued constraint language of finite arity and finite
  domain. We further extend this fixed-parameter tractability result
  to so-called ``scattered classes'' of VCSP instances where each
  connected component may belong to a different tractable class.
\end{abstract}


\section{Introduction}
\label{sec:intro}

Valued CSP (or VCSP for short) is a powerful framework that entails among others
\blfootnote{The authors acknowledge support by the Austrian Science Fund (FWF, project P26696). Robert Ganian is also affiliated with FI MU, Brno, Czech Republic.}the problems CSP and MAX-CSP as special cases \cite{Zivny12}.  A VCSP
instance consists of a finite set of cost functions over a finite set
of variables which range over a domain $D$, and the task is to
find an instantiation of these variables that minimizes the sum of the
cost functions. The VCSP framework is robust and has  been studied
in different contexts in computer science.  In its full generality,
VCSP considers cost functions that can take as values the rational
numbers and positive infinity.  CSP (feasibility) and Max-CSP
(optimisation) arise as special cases by limiting the values of cost
functions to $\{0,\infty\}$ and $\{0,1\}$, respectively. Clearly VCSP
is in general intractable. Over the last decades much research has
been devoted into the identification of tractable VCSP subproblems. An
important line of this research (see, e.g.,
\cite{JeavonsKrokhinZivny14,KolmogorovZivny,ThapperZivny15}) is the
characterization of tractable VCSPs in terms of restrictions on
the underlying \emph{valued constraint language} $\Gamma$, i.e., a set~$\Gamma$ of cost functions that guarantees polynomial-time solvability
of all VCSP instances that use only cost functions from $\Gamma$.
The VCSP restricted to instances with cost functions from $\Gamma$ is
denoted by $\textsc{VCSP}[\Gamma]$. 
 
In this paper we provide algorithmic results which allow us to
gradually augment a tractable VCSP based on the notion of a (strong)
\emph{backdoor} into a tractable class of instances, called the
\emph{base class}. Backdoors where introduced by Williams \emph{et al.}~\cite{WilliamsGomesSelman03,WilliamsGomesSelman03a} for SAT and
CSP and generalize in a natural way to VCSP{}. Let $\mathcal{C}$
denote a tractable class of VCSP instances over a finite domain $D$. A
backdoor of a VCSP instance $\cP$ into~$\mathcal{C}$ is a (small)
subset $B$ of the variables of $\cP$ such that for all partial
assignments $\alpha$ that instantiate $B$, the restricted instance
$\cP|_\alpha$ belongs to the tractable class $\mathcal{C}$. Once we
know such a backdoor~$B$ of size $k$ we can solve $\cP$ by solving at
most $|D|^k$ tractable instances. In other words, VCSP is then
\emph{fixed parameter tractable} parameterized by backdoor size. This
is highly desirable as it allows us to scale the
tractability for $\mathcal{C}$ to instances
outside the class, paying for an increased ``distance'' from
$\mathcal{C}$ only by a larger constant factor.

In order to apply this backdoor approach to solving a \VCSP{}
instance, we first need to \emph{find} a small backdoor.
This turns out to be an algorithmically challenging task. The
fixed-parameter tractability of backdoor detection has been subject
of intensive research in the context of SAT (see, e.g.,
\cite{GaspersSzeider12}) and CSP (see, e.g.,
\cite{CarbonnelCooper16}). In this paper we extend this line of
research to \VCSP.

First we  obtain some basic and fundamental results on backdoor
detection when the base class is defined by a valued constraint
language $\Gamma$. We obtain fixed-parameter tractability 
for the detection of backdoors into $\VCSP[\Gamma]$ where
$\Gamma$ is a valued constraint language with cost functions of  bounded arity. In fact, we show the stronger result: fixed-parameter
tractability also holds with respect to \emph{heterogeneous} base
classes of the form
$\VCSP[\Gamma_1] \cup \dots \cup \VCSP[\Gamma_\ell]$ where different
assignments to the backdoor variables may result in instances that
belong to different base classes $\VCSP[\Gamma_i]$.  A similar result
holds for \CSP{}, but the \VCSP{} setting is slightly more complicated as a
valued constraint language of finite arity over a finite domain is not
necessarily finite.

Secondly, we extend the basic fixed-parameter tractability result to
so-called \emph{scattered} base classes of the form
$\VCSP[\Gamma_1]\uplus \dots \uplus \VCSP[\Gamma_\ell]$ which contain
\VCSP{} instances where each connected component belongs to a tractable
class $\VCSP[\Gamma_i]$ for some $1\leq i \leq \ell$---again in the
heterogeneous sense that for different assignments to the backdoor
variables a single component of the reduced instance may belong to
different classes $\VCSP[\Gamma_i]$. Backdoors into a scattered base
class can be much smaller than backdoors into each single class it is
composed of, hence the gain is huge if we can handle scattered
classes. This boost in scalability does not come for free. Indeed,
already the ``crisp'' case of CSP, which was the topic of a recent
SODA paper \cite{GanianRamanujanSzeider16}, requires a sophisticated
algorithm which makes use of advanced techniques from
parameterized algorithm design. This algorithm works under the
requirement that the constraint languages contain all unary
constraints (i.e., is conservative); this is a reasonable requirement
as one needs these unary cost functions to express partial assignments
(see also Section~\ref{sec:prelim} for further discussion). Here we lift the crisp case to general \VCSP, and this also represents our main technical contribution.

To achieve this, we proceed in two phases. First we transform the backdoor detection problem from a general scattered class
$\VCSP(\Gamma_1)\uplus \dots \uplus \VCSP(\Gamma_\ell)$ to a scattered
class $\VCSP(\Gamma_1')\uplus \dots \uplus \VCSP(\Gamma_\ell')$ over
\emph{finite} valued constraint languages~$\Gamma_i'$. In the subsequent second
phase we transform the problem to a backdoor detection problem into a
scattered class
$\VCSP(\Gamma_1'')\uplus \dots \uplus \VCSP(\Gamma_\ell'')$ where each
$\Gamma_i''$ is a finite crisp language; i.e., we reduce from the \VCSP{}
setting to the \CSP{} setting. We believe that this sheds light on an interesting link between backdoors in the \VCSP{} and \CSP{} settings.
The latter problem can now be solved using the
known algorithm~\cite{GanianRamanujanSzeider16}.

\subsection*{Related Work}
Williams \emph{et al.}~\cite{WilliamsGomesSelman03,WilliamsGomesSelman03a}
introduced backdoors for CSP or SAT as a theoretical tool to capture
the overall combinatorics of instances. The purpose was an analysis of
the empirical behaviour of backtrack search algorithms. Nishimura
et. al~\cite{NishimuraRagdeSzeider04-informal} started the
investigation on the parameterized complexity of \emph{finding} a
small SAT backdoor and using it to solve the instance. This lead to a
number of follow-up work (see \cite{GaspersSzeider12}). Parameterized
complexity provides here an appealing framework, as given a CSP
instance with $n$ variables, one can trivially find a backdoor of size
$\leq k$ into a fixed tractable class of instances by trying all
subsets of the variable set containing $\leq k$ variables; but there are $\Theta(n^k)$
such sets, and therefore the running time of this brute-force
algorithm scales very poorly in~$k$. Fixed-parameter tractability
removes $k$ from the exponent providing running times of the form
$f(k)n^{c}$ which yields a significantly better scalability in backdoor
size.

Extensions to the basic notion of a
backdoor have been proposed, including backdoors with empty clause
detection \cite{DilkinaGomesSabharwal07}, backdoors in the context of
learning \cite{DilkinaGomesSabharwal09}, heterogeneous backdoors where
different instantiations of the backdoor variables may result in
instances that belong to different base
classes~\cite{GaspersMisraOSZ14}, and backdoors into scattered classes
where each connected component of an instance may belong to a
different tractable class~\cite{GanianRamanujanSzeider16}.  Le Bras \emph{et
al.}~\cite{LenbrasBernsteinGomesSelmanDover13} used backdoors to
speed-up the solution of hard problems in materials discovery, using a
crowd sourcing approach to find small backdoors.

The research on the parameterized complexity of backdoor detection was
also successfully extended to other problem areas including
disjunctive answer set programming~\cite{FichteSzeider15,FichteSzeider15b}, abstract
argumentation~\cite{DvorakOrdyniakSzeider12}, and  integer
linear programming \cite{GanianOrdyniak16}.
There are also several papers that investigate the parameterized
complexity of backdoor detection for CSP{}.  Bessi{\`e}re \emph{et
al.}~\cite{BessiereCarbonnelHebrardKatsirelosWalsh13}, considered
``partition backdoors'' which are sets of variables whose deletion
partitions the CSP instance into two parts, one falls into a tractable
class defined by a conservative polymorphism, and the other part is a
collection of independent constraints. They also performed an
empirical evaluation of the backdoor approach which resulted in
promising results.  Gaspers \emph{et al.}~\cite{GaspersMisraOSZ14} considered
heterogeneous backdoors into tractable CSP classes that are
characterized by  polymorphisms. A similar
approach was also undertaken by Carbonnel \emph{et
al.}~\cite{CarbonnelCooperHebrard14} who also considered base classes
that are ``$h$-Helly'' for a fixed integer $h$ under the additional
assumption that the domain is a finite subset of the natural numbers
and comes with a fixed ordering.

\section{Preliminaries}
\label{sec:prelim}
\subsection{Valued Constraint Satisfaction}
For a tuple $t$, we shall denote by $t[i]$ its $i$-th component.
We shall denote by $\cal Q$ the set of all rational numbers, by ${\cal
  Q}_\geq 0$ the set of all nonnegative rational numbers, and by
${\cal \ol Q}_{\geq 0}$ the set of all nonnegative rational numbers
together with positive infinity, $\infty$. We define
$\alpha+\infty=\infty+\alpha=\infty$ for all $\alpha\in {\cal \ol
  Q}_{\geq 0}$, and $\alpha \cdot \infty = \infty$ for all $\alpha\in
{\cal Q}_{\geq 0}$. The elements of ${\cal \ol Q}_{\geq 0}$ are called \emph{costs}.

For every fixed set $D$ and $m\geq 0$, a function $\phi$ from $D^m$ to ${\cal \ol Q}_{\geq 0}$ will be called a \emph{cost function} on $D$ of arity $m$. $D$ is called the \emph{domain}, and here we will only deal with finite domains.
If the range of $\phi$ is $\{0,\infty\}$, then $\phi$ is called a \emph{crisp} cost function. 

With every relation $R$ on $D$, we can associate a crisp cost function $\phi_R$ on $D$ which maps tuples in $R$ to $0$ and tuples not in $R$ to $\infty$. On the other hand, with every $m$-ary cost function $\phi$ we can associate a relation $R_\phi$ defined by $(x_1,\dots,x_m)\in R_\phi \Leftrightarrow \phi(x_1,\dots,x_m)< \infty$. In the view of the close correspondence between crisp cost functions and relations we shall use these terms interchangeably in the rest of the paper.

A \textsc{VCSP} instance consists of a set of variables, a set of possible values, and a multiset of valued constraints. Each valued constraint has an associated cost function which assigns a cost to every possible tuple of values for the variables in the scope of the valued constraint. The goal is to find an assignment of values to all of the variables that has the minimum total cost. A formal definition is provided below.

\begin{definition}[\textsc{VCSP}]
An instance $\cal P$ of the \textsc{Valued Constraint satisfaction Problem}, or \textsc{VCSP}, is a triple $(V,D,\cal C)$ where $V$ is a finite set of \emph{variables}, which are to be assigned values from the set $D$, and $\cal C$ is a multiset of \emph{valued constraints}. Each $c\in \cal C$ is a pair $c=(\vec{x},\phi)$, where $\vec{x}$ is a tuple of variables of length $m$ called the \emph{scope} of $c$, and $\phi: D^m\rightarrow {\cal \ol Q}_{\geq 0}$ is an $m$-ary cost function. An \emph{assignment} for the instance $\cal P$ is a mapping $\tau$ from $V$ to $D$. We extend $\tau$ to a mapping from $V^k$ to $D^k$ on tuples of variables by applying $\tau$ componentwise. The \emph{cost} of an assignment $\tau$ is defined as follows:

\[\text{Cost}_{\cal P}(\tau)=\sum_{(\vec{x},\phi)\in \cal C}\phi(\tau(\vec{x})).\]

The task for VCSP is the computation of an assignment with minimum
cost, called a \emph{solution} to $\cal P$.
\end{definition}

For a constraint $c$, we will use $\var(c)$ to denote the set of variables which occur in the scope of $c$. We will later also deal with the \emph{constraint satisfaction problem}, or \CSP{}. Having already defined {\VCSP}, it is advantageous to simply define {\CSP} as the special case of {\VCSP} where each valued constraint has a crisp cost function.

The following representation of a cost function will sometimes be useful for our purposes. A \emph{cost table} for an $m$-ary cost function $\phi$ is a table with $D^m$ rows and $m+1$ columns with the following property: each row corresponds to a unique tuple $\vec{a}=(a_1,\dots,a_m)\in D^m$, for each $i\in [m]$ the position $i$ of this row contains $a_i$, and position $m+1$ of this row contains $\phi(a_1,\dots,a_m)$.

A \emph{partial assignment} is a mapping from $V'\subseteq V$ to $D$. Given a partial assignment $\tau$, the \emph{application} of $\tau$ on a valued constraint $c=(\vec{x},\phi)$ results in a new valued constraint $c|_{\tau}=(\vec{x}',\phi')$ defined as follows. Let $\vec{x}'=\vec{x}\setminus V'$ (i.e., $\vec{x}'$ is obtained by removing all elements in $V\cap \vec{x}$ from $\vec{x}$) and $m'=|\vec{x'}|$. Then for each $\vec{a'}\in D^{m'}$, we set $\phi'(\vec{a'})=\phi(\vec{a})$ where for each $i\in [m]$
\[\vec{a}[i]= 
 \left\{ \begin{array}{ll}
                              \tau(\vec{x}[i])&  \mbox{\ if } \vec{x}[i]\in V'  \\
                             \vec{a'}[i-j] & \mbox{\ otherwise, where } j=|\SB \vec{x}[p] \SM p\in [i] \SE \cap V'|.                             
                            \end{array}
                    \right. \]
Intuitively, the tuple $\vec{a}$ defined above is obtained by taking the original tuple $\vec{a}'$ and enriching it by the values of the assignment $\tau$ applied on the ``missing'' variables from $\vec{x}$. In the special case when $\vec{x}'$ is empty, the valued constraint $c|_{\tau}$ becomes a nullary constraint whose cost function $\phi'$ will effectively be a constant.
The application of $\tau$ on a \textsc{VCSP} instance $\cal P$ then
results in a new \textsc{VCSP} instance ${\cal P}|_{\tau}=(V\setminus
V', D, \cal C')$ where ${\cal C'}=\SB c|_{\tau} \SM c\in \cal
C\SE$. It will be useful to observe that applying a partial assignment
$\tau$ can be done in time linear in $|\cal P|$ (each valued constraint
can be processed independently, and the processing of each such valued
constraint consists of merely pruning the cost table).

\subsection{Valued Constraint Languages}

A \emph{valued constraint language} (or \emph{language} for short) is
a set of cost functions. The arity of a language $\Gamma$ is the
maximum arity of a cost function in $\Gamma$, or $\infty$ if $\Gamma$
contains cost functions of arbitrarily large arities.  Each language
$\Gamma$ defines a set $\textsc{VCSP}[\Gamma]$ of \textsc{VCSP}
instances which only use cost functions from $\Gamma$; formally,
$(V, D, {\cal C})\in\textsc{VCSP}[\Gamma]$ iff each
$(\vec{x},\phi)\in \cal C$ satisfies $\phi\in \Gamma$. A language is \emph{crisp} if it contains only crisp cost functions.

A language
$\Gamma$ is \emph{globally tractable} if there exists a
polynomial-time algorithm which solves $\textsc{VCSP}[\Gamma].$\footnote{The literature also defines the notion of \emph{tractability}~\cite{JeavonsKrokhinZivny14,KrokhinBulatovJeavons05}, which we do not consider here. We remark that, to the best of our knowledge, all known tractable constraint languages are also globally tractable~\cite{JeavonsKrokhinZivny14,KrokhinBulatovJeavons05}
}.
Similarly, a class $\cal H$ of \VCSP{} instances is called tractable if there exists a polynomial-time algorithm which solves $\cal H$.
For technical reasons, we will implicitly assume that every language contains all nullary cost functions (i.e., constants); it is easily seen that adding such cost functions into a language has no impact on its tractability.

There are a few other properties of languages that will be required to formally state our results.
A language $\Gamma$ is \emph{efficiently recognizable} if there
exists a polynomial-time algorithm which takes as input a cost
function $\phi$ and decides whether $\phi\in\Gamma$. We note that
every finite language is efficiently recognizable.

A language $\Gamma$ is \emph{closed under partial assignments} if for
every instance $\cP\in \VCSP[\Gamma]$ and every partial assignment
$\tau$ on $\cP$ and every valued constraint $c=(\vec{x},\phi)$ in
$\cP$, the valued constraint $c|_{\tau}=(\vec{x}',\phi')$ satisfies
$\phi'\in \Gamma$. The \emph{closure of a language~$\Gamma$ under
  partial assignments}, is the language $\Gamma'\supseteq \Gamma$
containing all cost functions that can be obtained from $\Gamma$ via
partial assignments; 
formally, $\Gamma'$ contains a cost function
$\phi'$ if and only if there exists a cost function $\phi\in \Gamma$
such that for a constraint $c=(\vec{x},\phi)$ and an assignment
$\tau:X\rightarrow D$ defined on a subset $X\subseteq \var(c)$ we have 
$c|_{\tau}=(\vec{x}',\phi')$. 

If a language $\Gamma$ is closed under partial assignments, then also
$\VCSP[\Gamma]$ is closed under partial assignments, which is a
natural property and provides a certain robustness of the class. This
robustness is also useful when considering backdoors into
$\VCSP[\Gamma]$ (see Section~\ref{sec:bdstandard}), as then every
superset of a backdoor remains a backdoor. Incidentally, being closed
under partial assignments is also a property of tractable classes
defined in terms of a polynomial-time
subsolver~\cite{WilliamsGomesSelman03,WilliamsGomesSelman03a} where
the property is called \emph{self-reducibility}.

A language is \emph{conservative} if it contains all unary cost functions~\cite{KolmogorovZivny}.
We note that being closed under partial assignments
is  closely related to the well-studied property of
conservativeness.  Crucially, for every
conservative globally tractable language $\Gamma$, its closure under partial
assignments $\Gamma'$ will also be globally tractable; indeed, one can observe
that every instance $\cP\in \VCSP[\Gamma']$ can be converted, in linear
time, to a solution-equivalent instance $\cP'\in \VCSP[\Gamma]$ by
using infinity-valued (or even sufficiently high-valued) unary cost
functions to model the effects of partial assignments.

\subsection{Parameterized Complexity}
\label{sub:parcomp}
We give a brief and rather informal review of the most important
concepts of parameterized complexity. For an in-depth treatment of the
subject we refer the reader to other sources
\cite{CyganFKLMPPS15,DowneyFellows13,FlumGrohe06,Niedermeier06}.

The instances of a parameterized problem can be considered as pairs
$(I,k)$ where~$I$ is the \emph{main part} of the instance and $k$ is
the \emph{parameter} of the instance; the latter is usually a
non-negative integer.  A parameterized problem is
\emph{fixed-parameter tractable} (FPT) if instances $(I,k)$ of size
$n$ (with respect to some reasonable encoding) can be solved in time
$\bigoh(f(k)n^c)$ where $f$ is a computable function and $c$ is a
constant independent of $k$. The function $f$ is called the
\emph{parameter dependence}, and algorithms with running time in this
form are called \emph{fixed-parameter algorithms}. Since the parameter
dependence is usually superpolynomial, we will often give the running
times of our algorithms in $\bigoh^*$ notation which suppresses
polynomial factors.  Hence the running time of an FPT algorithm can be
simply stated as $\bigoh^*(f(k))$.

The exists a completeness theory which allows to obtain strong
theoretical evidence that a parameterized problem is \emph{not}
fixed-parameter tractable. This theory is based on a hierarchy of
parameterized complexity classes
$\W{1}\subseteq \W{2} \subseteq \dots$ where all inclusions are
believed to be proper. If a parameterized problem is shown to be
$\W{i}$-hard for some $i\geq 1$, then the problem is unlikely to be
fixed-parameter tractable, similarly to an NP-complete problem being
solvable in polynomial time
\cite{CyganFKLMPPS15,DowneyFellows13,FlumGrohe06,Niedermeier06}.

 \section{Backdoors into Tractable Languages} 
 \label{sec:bdstandard}

 This section is devoted to establishing the first general results for
 finding and exploiting backdoors for \VCSP.  We first present the
 formal definition of backdoors in the context of VCSP and describe
 how such backdoors once found, can be used to solve the VCSP
 instance.  Subsequently, we show how to detect backdoors into a
 single tractable {\VCSP} class with certain properties. In fact, our
 proof shows something stronger. That is, we show how to detect
 \emph{heterogeneous} backdoors into a finite set of VCSP classes
 which satisfy these properties. The notion of heterogeneous backdoors
 is based on that introduced by Gaspers 
 \emph{et al.}~\cite{GaspersMisraOSZ14}.
For now, we proceed with the definition of a backdoor.

 \begin{definition}\label{def:backdoors}
Let $\cal H$ be a fixed class of VCSP instances over a domain $D$
and let ${\cal P}=(V,D,\cal C)$ be a VCSP instance. A
\emph{backdoor} into $\cal H$ is a  subset $X\subseteq V$ such that
for each assignment $\tau:X\rightarrow D$, the reduced instance ${\cal
  P}|_{\tau}$ is in $\cal H$. 
 \end{definition}

 We note that this naturally corresponds to the notion of a
 \emph{strong} backdoor in the context of Constraint Satisfaction and
 Satisfiability~\cite{WilliamsGomesSelman03,WilliamsGomesSelman03a};
 here we drop the adjective ``strong'' because the other kind of
 backdoors studied on these structures (so-called \emph{weak} backdoors)
 do not seem to be useful in the general {\VCSP} setting. Namely,
 in analogy to the CSP setting, one would define a weak backdoor of a
 VCSP instance ${\cal P}=(V,D,\cal C)$ into $\cal H$ as a subset
 $X\subseteq V$ such that for some assignment $\tau:X\rightarrow D$
 (i)~the reduced instance ${\cal P}|_{\tau}$ is in $\cal H$ and
 (ii)~$\tau$ can be extended to an assignment to $V$ of minimum cost.
 However, in order to ensure (ii) we need to compare the cost of $\tau$ with  the costs of all
 other assignments $\tau'$ to~$V$. If $X$ is not a strong backdoor, then
 some of the reduced instances ${\cal P}|_{\text{$\tau'$ restricted
     to $X$}}$ will be outside of
 ${\cal H}$, and so in general we have no efficient way of
 determining a minimum cost assignment for it.

We begin by showing that small backdoors for globally tractable languages can always be used to efficiently solve {\VCSP} instances as long as the domain is finite (assuming such a backdoor is known).

\begin{lemma}
\label{lem:using}
Let $\cal H$ be a tractable class of \VCSP{} instances over a finite domain $D$. There exists an algorithm which takes as input a $\VCSP$ instance $\cal P$ along with a backdoor $X$ of $\PP=(V,D,\cal C)$ into $\cal H$, runs in time $\bigoh^*(|D|^{|X|})$, and solves $\PP$.
\end{lemma}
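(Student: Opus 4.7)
The plan is a straightforward branch-and-solve strategy that exploits the definition of a backdoor directly. First I would enumerate all mappings $\tau\colon X\to D$; since $D$ is finite, there are exactly $|D|^{|X|}$ such mappings, and they can be produced one by one in total time $\bigoh^\ast(|D|^{|X|})$.

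For each enumerated $\tau$, I would compute the reduced instance $\cP|_{\tau}$. As observed right after the definition of partial assignment application, this reduction is linear in $|\cP|$: each valued constraint $c=(\vec x,\phi)$ is processed independently by pruning its cost table according to $\tau$, and constraints whose scope lies entirely inside $X$ collapse to nullary constraints (constants) that are absorbed into the cost of $\cP|_\tau$. By the defining property of a backdoor, every such $\cP|_{\tau}$ lies in the tractable class $\cH$, so I can invoke the assumed polynomial-time solver for $\cH$ to obtain an optimum assignment $\sigma_\tau\colon V\setminus X\to D$ together with its cost $\text{Cost}_{\cP|_\tau}(\sigma_\tau)$.

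The correctness step is to observe that for any total assignment $\sigma\colon V\to D$, if $\tau=\sigma|_X$ and $\sigma'=\sigma|_{V\setminus X}$ then
\[
\text{Cost}_{\cP}(\sigma)\;=\;\text{Cost}_{\cP|_{\tau}}(\sigma'),
\]
which follows directly from the definition of $c|_{\tau}$ (the ``missing'' coordinates in $\vec a$ are filled in by $\tau$, and constraints that become nullary contribute the appropriate constant). Consequently
\[
\min_{\sigma\colon V\to D}\text{Cost}_{\cP}(\sigma)\;=\;\min_{\tau\colon X\to D}\;\min_{\sigma'\colon V\setminus X\to D}\text{Cost}_{\cP|_{\tau}}(\sigma'),
\]
so returning the combined assignment $\tau\cup\sigma_\tau$ that minimizes $\text{Cost}_{\cP|_\tau}(\sigma_\tau)$ over all $\tau$ yields an optimal solution to $\cP$.

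The running time is $|D|^{|X|}$ iterations, each doing a linear reduction plus one polynomial-time call to the $\cH$-solver, giving the claimed $\bigoh^\ast(|D|^{|X|})$ bound. I do not anticipate any serious obstacle: the only point that needs a little care is making sure that the contribution of constraints whose scope sits inside $X$ is not lost when passing to $\cP|_\tau$, and this is handled by treating such constraints as nullary (constant) constraints, which the tractable solver for $\cH$ can simply add to the cost of whichever optimum it outputs.
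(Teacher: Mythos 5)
Your proposal is correct and follows essentially the same route as the paper: enumerate all $|D|^{|X|}$ assignments to $X$, reduce in linear time, call the polynomial-time solver for $\cal H$ on each reduced instance, and return the cheapest combined assignment. The extra care you take in justifying $\text{Cost}_{\cal P}(\sigma)=\text{Cost}_{{\cal P}|_\tau}(\sigma')$ and in accounting for constraints that collapse to nullary constants is a welcome elaboration of a step the paper leaves implicit, but it is not a different argument.
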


\begin{proof}
Let $\BB$ be a polynomial-time algorithm which solves every $\PP$ in $\cal H$, i.e., outputs a minimum-cost assignment in $\PP$; the existence of $\BB$ follows by the tractability of $\cal H$. Consider the following algorithm $\AA$. First, $\AA$ branches on the at most $|D|^{|X|}$-many partial assignments of $X$. In each branch, $\AA$ then applies the selected partial assignment~$\tau$ to obtain the instance $\PP|_{\tau}$ in linear time. In this branch, $\AA$ proceeds by calling $\BB$ on $\PP|_{\tau}$, and stores the produced assignment along with its cost. After the branching is complete $\AA$ reads the table of all of the at most $|D|^{|X|}$ assignments and costs outputted by $\BB$, and selects one assignment (say $\alpha$) with a minimum value (cost) $a$. Let $\tau$ be the particular partial assignment on $X$ which resulted in the branch leading to $\alpha$. $\AA$ then outputs the assignment $\alpha \cup \tau$ along with the value (cost) $a$.
\end{proof}

Already for crisp languages it is known that having a small backdoor
does not necessarily allow for efficient (i.e., fixed-parameter) algorithms when the domain is not bounded. Specifically, the \W{1}-hard $k$-clique problem can be encoded into a \CSP{} with only~$k$ variables~\cite{PapadimitriouYannakakis99}, which naturally contains a backdoor of size at most $k$ for every crisp language under the natural assumption that the language contains the empty constraint. Hence the finiteness of the domain in Lemma~\ref{lem:using} is a necessary condition for the statement to hold.
   
Next, we show that it is possible to find a small backdoor into 
 $\VCSP[\Gamma]$ efficiently (or correctly determine that no such small backdoor exists) as long as $\Gamma$ has two properties. First, $\Gamma$ must be efficiently recognizable; it is easily seen that this condition is a necessary one, since detection of an empty backdoor is equivalent to determining whether the instance lies in $\VCSP[\Gamma]$. Second, the arity of $\Gamma$ must be bounded. This condition is also necessary since already in the more restricted \CSP{} setting it was shown that backdoor detection for a wide range of natural crisp languages (of unbounded arity) is \W{2}-hard~\cite{GaspersMisraOSZ14}. 

Before we proceed, we introduce the notion of heterogeneous backdoors for \VCSP{} which represent a generalization of backdoors into classes defined in terms of a single language.
For languages $\Gamma_1,\dots,\Gamma_\ell$, a heterogeneous backdoor
is a backdoor into the class ${\cal H}=\VCSP[\Gamma_1]\cup \dots \cup
\VCSP[\Gamma_\ell]$; in other words, after each assignment to the
backdoor variables, all cost functions in the resulting instance must belong to a language from our set.
We now show that detecting small heterogenous backdoors is fixed-parameter tractable parameterized by the size of the backdoor.

\enlargethispage*{5mm}

\begin{lemma}
\label{lem:finding}
Let $\Gamma_1,\dots, \Gamma_\ell$ be efficiently recognizable languages over a domain $D$ of size at most $d$ and let $q$ be a bound on the arity of $\Gamma_i$ for every $i\in [\ell]$.
There exists an algorithm which takes as input a $\VCSP$ instance $\cal P$ over $D$ and an integer $k$, runs in time $O^*( (\ell \cdot d \cdot  (q+1))^{k})$, and either outputs a backdoor $X$ of $\cal P$ into $\VCSP[\Gamma_1]\cup \dots \cup \VCSP[\Gamma_\ell]$ such that $|X|\leq k$ or correctly concludes that no such backdoor exists.
\end{lemma}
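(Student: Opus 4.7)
The plan is to design a bounded-search-tree algorithm that maintains a partial backdoor candidate $X \subseteq V$ (initially $X = \emptyset$) and tries to extend it to a full backdoor of size at most $k$. At each recursive call on $(\cP, X)$, I first \emph{verify} whether $X$ is already a backdoor: I enumerate the $|D|^{|X|}$ partial assignments $\tau : X \to D$, compute each $\cP|_{\tau}$ in linear time, and use efficient recognizability of each $\Gamma_i$ to test whether $\cP|_{\tau} \in \VCSP[\Gamma_i]$ for some $i \in [\ell]$. If every $\tau$ passes the test, I return $X$; if $|X| \geq k$ and some $\tau$ fails, I return failure.

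Otherwise the verification step exhibits a specific $\tau : X \to D$ such that $\cP|_{\tau}$ lies in no $\VCSP[\Gamma_i]$. For each $i \in [\ell]$ I then locate, in polynomial time, a constraint $c_i = (\vec{x}_i,\phi_i) \in \cP|_{\tau}$ with $\phi_i \notin \Gamma_i$. The key observation driving the branching is that any backdoor $X^{\star} \supseteq X$ of $\cP$ must satisfy $(X^{\star} \setminus X) \cap S \neq \emptyset$, where $S := \bigcup_{i=1}^{\ell} \var(c_i)$. Indeed, if $X^{\star} \setminus X$ were disjoint from $S$, then for every extension $\tau^{\star} : X^{\star} \to D$ of $\tau$ each $c_i$ would appear unchanged in $\cP|_{\tau^{\star}}$; since $\phi_i \notin \Gamma_i$, the reduced instance $\cP|_{\tau^{\star}}$ would belong to no $\VCSP[\Gamma_i]$, contradicting that $X^{\star}$ is a backdoor. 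Note that this argument does not rely on predicting which language the instance ultimately lands in --- the same obstruction rules out all $\ell$ possibilities simultaneously.

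The algorithm therefore branches on the choice of a variable $v \in S$ to add to $X$, recursing on $(\cP, X \cup \{v\})$ in each branch. Since every $c_i$ has arity at most $q$, we get $|S| \leq \ell \cdot q$, so the branching factor is at most $\ell (q+1)$. The recursion depth is at most $k$, and at each of the at most $(\ell(q+1))^k$ nodes the verification step takes $\bigoh^{*}(d^{|X|}) \leq \bigoh^{*}(d^k)$ time. Multiplying these factors yields the claimed $\bigoh^{*}((\ell \cdot d \cdot (q+1))^k)$ running time.

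The main obstacle will be justifying the key observation crisply in the heterogeneous setting, since different extensions $\tau^{\star}$ are allowed to land in different target languages $\Gamma_j$; the argument above has to treat this uniformly by exhibiting a single witness $c_j$ for every candidate $j$ at once. Beyond that, the correctness of the recursive framework and the analysis of the search-tree size are standard, and all polynomial overheads (applying partial assignments, recognizing membership in each $\Gamma_i$) are absorbed by the $\bigoh^{*}$ notation.
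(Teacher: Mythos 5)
Your overall branching framework matches the paper's, but there is a genuine gap in the running-time analysis: you assert that ``every $c_i$ has arity at most $q$'' and conclude $|S|\leq \ell\cdot q$. The constraints $c_i$ are constraints of the (reduced) \emph{input} instance $\cP|_{\tau}$, not cost functions drawn from the languages $\Gamma_1,\dots,\Gamma_\ell$; the lemma places no arity bound on $\cP$, and indeed a constraint may fail to lie in $\Gamma_i$ precisely because its arity exceeds $q$. Consequently $S=\bigcup_{i}\var(c_i)$ can contain up to $\ell\cdot n$ variables, your branching factor is not bounded by $\ell(q+1)$, and the search tree is no longer of size $(\ell(q+1))^k$ --- so the claimed bound $\bigoh^*((\ell\cdot d\cdot(q+1))^k)$, and even fixed-parameter tractability, does not follow from your analysis.

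The missing idea, which the paper's proof uses, is that one need not branch on all of $\var(c_i)$ when $|\var(c_i)|>q$: if a backdoor $X^\star\supseteq X$ works via language $\Gamma_j$ for some extension of $\tau$, then $c_j$ must be reduced to a constraint of $\Gamma_j$, hence of arity at most $q$, so $X^\star$ contains all but at most $q$ variables of $\var(c_j)$ and therefore meets \emph{every} $(q+1)$-element subset of $\var(c_j)$. It thus suffices to place into the branching set, for each $i$, an arbitrary $(q+1)$-element subset of $\var(c_i)$ (or all of $\var(c_i)$ if it has at most $q$ variables); this caps the branching factor at $\ell(q+1)$ and yields the stated running time. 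The rest of your argument --- the $\bigoh^*(d^k)$ verification at each node, the heterogeneous obstruction argument showing that $X^\star$ must meet the union over all $i$ simultaneously, and the depth-$k$ recursion --- is sound and essentially identical to the paper's.
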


\begin{proof}
  The algorithm is a standard branching algorithm (see also
  \cite{GaspersMisraOSZ14}). Formally, the algorithm is called 
  $\algo$, takes as input an instance $\cP=(V,D,\cal C)$, integer $k$, a set of variables $B$ of size at most $k$ and  in time
  $O^*((\ell\cdot d \cdot (q+1))^k)$ either correctly concludes that $\cal P$ has
  no  backdoor $Z\supseteq B$ of size at most $k$ into $\VCSP[\Gamma_1]\cup \dots \cup \VCSP[\Gamma_\ell]$ or returns a backdoor $Z$ of $\cal P$  into $\VCSP[\Gamma_1]\cup \dots \cup \VCSP[\Gamma_\ell]$ of size at most
  $k$. The algorithm is initialized with $B=\emptyset$.

In the base case, 
if $|B|=k$, and $B$ is a backdoor of $\cP$ into $\VCSP[\Gamma_1]\cup \dots \cup \VCSP[\Gamma_\ell]$ then we return the set $B$. Otherwise, we return {\No}. We now move to the description of the case when $|B|<k$.

In this case, if for every $\sigma:B\to D$ there is an $i\in [\ell]$ such that $\cP|_{\sigma}\in \VCSP[\Gamma_i]$, then it sets $
  Z=B$ and returns it. That is, if $B$ is already found to be a backdoor of the required kind, then the algorithm returns $B$. 
  Otherwise, it computes an assignment $\sigma:B\to D$ and valued constraints $c_1,\dots, c_\ell$ in $\cP|_{\sigma}$  such that for every $i\in [\ell]$, the cost function of $c_i$ is not in 
 $\Gamma_i$. 
  Observe that for some $\sigma$, such a set of constraints must exist. Furthermore, since every $\Gamma_i$ is efficiently recognizable and $B$ has size at most $k$, the selection of these valued constraints takes time $\bigoh^*(d^k)$. 
The algorithm now constructs a set $Y$ as follows. Initially, $Y=\emptyset$. For each $i\in [\ell]$, if the scope of the
constraint $c_i$ contains more than $q$ variables then it adds to  $Y$
an arbitrary $q+1$-sized subset of the scope of $c_i$. Otherwise, it adds to $Y$ all the variables in the scope of $c$. This completes the definition of $Y$. Observe that any backdoor set for the given instance which contains $B$ must also intersect $Y$. Hence the algorithm now branches on the set $Y$. Formally, for every $x\in Y$ it executes the recursive calls $\algo(\cP,k,B\cup \{x\})$. If for some $x\in Y$, the invoked call returned a set of variables, then it must be a backdoor set of the given instance and hence it is returned. Otherwise, the algorithm returns {\No}.

%
%

    Since the branching factor of this algorithm is at most $\ell \cdot (q+1)$ and
   the set $B$, whose size is upper bounded by $k$, is enlarged with each recursive call, the number of nodes in the search tree is bounded by $\bigoh((\ell\cdot (q+1))^k)$. Since the time spent at each node is bounded by $\bigoh^*(d^k)$, 
    the running time of the algorithm $\algo$ is bounded by $\bigoh^*((\ell\cdot (q+1)\cdot d)^k)$. 
\end{proof}
Combining Lemmas~\ref{lem:using} and~\ref{lem:finding}, we obtain the
main result of this section.

\enlargethispage*{5mm}

\begin{corollary}
\label{cor:bd}
Let $\Gamma_1,\dots, \Gamma_\ell$ be globally tractable and efficiently recognizable languages each of  arity at most $q$ over a domain of size $d$. There exists an algorithm which solves {\VCSP} in time $O^*( (\ell \cdot d\cdot  (q+1))^{k^2+k})$, where  $k$ is the size of a minimum backdoor of the given instance into $\VCSP[\Gamma_1]\cup \dots \cup \VCSP[\Gamma_\ell]$.
\end{corollary}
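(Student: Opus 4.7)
The plan is to combine Lemma~\ref{lem:finding} (detection) with Lemma~\ref{lem:using} (exploitation), addressing the only real subtlety, namely that the minimum backdoor size $k$ is not known in advance. Concretely, I would run the detection algorithm iteratively with candidate budgets $k' = 0, 1, 2, \ldots$; by the correctness of Lemma~\ref{lem:finding}, the first $k'$ for which a set $X$ is returned is exactly $k' = k$, and $X$ is a backdoor into $\VCSP[\Gamma_1]\cup\dots\cup\VCSP[\Gamma_\ell]$ of size at most $k$. Since the running time of the $k'$-th call is $\bigoh^*((\ell\cdot d\cdot (q+1))^{k'})$, summing the resulting geometric series yields a total detection cost comfortably within the exponent claimed in the corollary.

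Before feeding $X$ into Lemma~\ref{lem:using}, I need to observe that the class $\cH := \VCSP[\Gamma_1]\cup\dots\cup\VCSP[\Gamma_\ell]$ is tractable in the sense required by that lemma. This is where efficient recognizability plays its role: given any instance of $\cH$, one checks membership in each $\VCSP[\Gamma_i]$ in polynomial time, picks some $i$ for which the instance lies in $\VCSP[\Gamma_i]$, and solves it in polynomial time using the algorithm witnessing global tractability of $\Gamma_i$. Hence $\cH$ is tractable, and Lemma~\ref{lem:using} applies to $\cP$ with the backdoor $X$, producing an optimal assignment in time $\bigoh^*(d^{|X|}) \le \bigoh^*(d^k)$.

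Adding detection and exploitation costs and absorbing everything into the stated exponent $k^2+k$ yields the claimed running time. Since the argument is essentially a direct composition of two previously established results, I do not anticipate any genuine obstacle; the only care needed is in checking that the union of the base classes really does form a single tractable class, and that the unknown value of $k$ can be handled by iterated deepening without blowing up the exponent, both of which are straightforward.
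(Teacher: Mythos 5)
Your proposal is correct and takes essentially the same approach as the paper, which establishes the corollary simply by combining Lemma~\ref{lem:finding} (detection) with Lemma~\ref{lem:using} (exploitation); your iterative-deepening handling of the unknown $k$ and your check that the union class $\VCSP[\Gamma_1]\cup\dots\cup\VCSP[\Gamma_\ell]$ is tractable (via efficient recognizability plus global tractability) are exactly the details needed to make that composition go through. In fact your analysis yields the sharper bound $\bigoh^*((\ell\cdot d\cdot(q+1))^{k})$, which trivially implies the stated $\bigoh^*((\ell\cdot d\cdot(q+1))^{k^2+k})$.
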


 \section{Backdoors into Scattered Classes}
 \label{sec:bdscattered}
Having established Corollary~\ref{cor:bd} and knowing that both the
arity and domain restrictions of the language are necessary, it is
natural to ask whether it is possible to push the frontiers of
tractability for backdoors to more general classes of VCSP
instances. In particular, there is no natural reason why the instances
we obtain after each assignment into the backdoor should necessary
always belong to the same language $\Gamma$ even if $\Gamma$ itself is one among several globally tractable languages. In fact, it is not
difficult to show that as long as each ``connected component'' of the
instance belongs to some tractable class after each assignment into
the backdoor, then we can use the backdoor in a similar fashion as in
Lemma~\ref{lem:using}. Such a generalization of backdoors from single
languages to collections of languages has recently been obtained in
the \CSP{} setting~\cite{GanianRamanujanSzeider16} for conservative
constraint languages. We proceed by formally defining these more general classes of \VCSP{} instances, along with some other required notions.

\subsection{Scattered Classes}
\label{sub:scattered}
A {\VCSP} instance $(V,D,\cal C)$ is \emph{connected} if for each
partition of its variable set into nonempty sets $V_1$ and $V_2$, there exists
at least one constraint $c\in \cal C$ such that $\var(c)\cap V_1\neq \emptyset$
and $\var(c)\cap V_2\neq \emptyset$. A \emph{connected component} of
$(V,D,\cal C)$  is a maximal connected subinstance $(V',D,{\cal
  C}')$ for $V'\subseteq V$, ${\cal C}'\subseteq \cal C$.
These notions naturally correspond to the connectedness and connected components of standard graph  representations of {\VCSP} instances. 


Let $\Gamma_1,\dots,\Gamma_d$ be languages. Then the \emph{scattered class} $\VCSP(\Gamma_1)\uplus \dots \uplus \VCSP(\Gamma_d)$ is the class of all instances $(V,D,\cal C)$ which may be partitioned into pairwise variable disjoint subinstances $(V_1,D,{\cal C}_1),\dots,(V_d,D,{\cal C}_d)$ such that $(V_i,D,{\cal C}_i)\in \VCSP[\Gamma_i]$ for each $i\in [d]$. Equivalently, an instance $\PP$ is in $\VCSP(\Gamma_1)\uplus \dots \uplus \VCSP(\Gamma_d)$ iff each connected component in $\PP$ belongs to some $\VCSP[\Gamma_i]$, $i\in [d]$.

\begin{lemma}
Let $\Gamma_1,\dots, \Gamma_d$ be globally tractable languages. Then there exists a polynomial-time algorithm solving {\VCSP} for all instances $P\in \VCSP(\Gamma_1)\uplus \dots \uplus \VCSP(\Gamma_d)$.
\end{lemma}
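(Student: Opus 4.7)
The plan is to solve each connected component of the input separately and concatenate the resulting assignments. Since every valued constraint of $\PP$ has its scope inside a single connected component, the total cost $\text{Cost}_{\PP}$ decomposes as a sum over components, so any concatenation of optimal per-component assignments is itself optimal overall.

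First I would compute the connected components $\PP_1,\dots,\PP_c$ of $\PP=(V,D,\mathcal C)$ in polynomial time via a standard graph traversal (treating two variables as adjacent iff they co-occur in the scope of some valued constraint). By the definition of the scattered class, for each $j\in[c]$ there exists an index $i(j)\in[d]$ with $\PP_j\in \VCSP[\Gamma_{i(j)}]$, although the algorithm is not told $i(j)$ in advance and a component may a priori belong to more than one of the $\VCSP[\Gamma_i]$.

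To sidestep the lack of knowledge of $i(j)$, for every component $\PP_j$ and every $i\in[d]$ I would run the polynomial-time algorithm $\mathcal A_i$ witnessing global tractability of $\Gamma_i$ on input $\PP_j$, obtaining a candidate assignment, and evaluate its cost on $\PP_j$ directly by summing cost functions. I then set $\tau_j^\ast$ to be the candidate of minimum cost on $\PP_j$, discarding any call whose output is not a well-formed assignment of $\var(\PP_j)$ into $D$, and output $\tau^\ast=\bigcup_{j=1}^c \tau_j^\ast$.

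For correctness, fix $j$ and let $i^\ast=i(j)$; since $\PP_j\in \VCSP[\Gamma_{i^\ast}]$ the call $\mathcal A_{i^\ast}(\PP_j)$ is guaranteed to return a minimum-cost assignment of $\PP_j$, while every other well-formed candidate has cost on $\PP_j$ at least this optimum, so $\tau_j^\ast$ is optimal for $\PP_j$; summing over components yields optimality of $\tau^\ast$ for $\PP$. The running time is polynomial since we make at most $c\cdot d$ calls to polynomial-time algorithms together with polynomial-time cost evaluations. The only mild subtlety, and the single point of care, is that global tractability of $\Gamma_i$ says nothing about $\mathcal A_i$'s behaviour on inputs outside $\VCSP[\Gamma_i]$; this is precisely why we must compare candidates by their actual costs on $\PP_j$ rather than trust any single call, and why we explicitly discard ill-formed outputs.
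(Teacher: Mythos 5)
Your proof is correct and matches the paper's (implicit) argument: the paper in fact gives no proof of this lemma, remarking only later that ``each connected component can be resolved independently of the others,'' which is precisely the decomposition you carry out, including the necessary trick of running all $d$ algorithms per component and keeping the cheapest well-formed candidate (necessary because the $\Gamma_i$ are not assumed efficiently recognizable here). The one small refinement worth adding is to run each $\mathcal{A}_i$ with a polynomial step-count clock and discard timed-out calls, since global tractability only guarantees polynomial running time on inputs that actually belong to $\VCSP[\Gamma_i]$.
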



It is worth noting that while scattered classes on their own are a somewhat trivial extension of the tractable classes defined in terms of individual languages, backdoors into scattered classes can be much smaller than backdoors into each individual globally tractable language (or, more precisely, each individual class defined by a globally tractable language). That is because a backdoor can not only simplify cost functions to ensure they belong to a specific language, but it can also disconnect the instance into several ``parts'', each belonging to a different language, and furthermore the specific language each ``part'' belongs to can change for different assignments into the backdoor. 
As a simple example of this behavior, consider the boolean domain, let
$\Gamma_1$ be the globally tractable crisp language corresponding to
Horn constraints~\cite{Schaefer78}, and let $\Gamma_2$ be a globally
tractable language containing only submodular cost
functions~\cite{CohenJeavonsZivny08}. It is not difficult to construct
an instance $\PP=(V_1\cup V_2\cup \{x\},\{0,1\},\cal C)$ such that 
(\textbf{a}) every assignment to $x$ disconnects $V_1$ from $V_2$,
(\textbf{b}) in $\PP|_{x\mapsto 0}$, all valued constraints over $V_1$ are crisp Horn constraints and all valued constraints over $V_2$ are submodular, and
(\textbf{c}) in $\PP|_{x\mapsto 1}$, all valued constraints over $V_1$ are submodular and all valued constraints over $V_2$ are crisp Horn constraints.
In the hypothetical example above, it is easy to verify that $x$ is a backdoor into $\VCSP[\Gamma_1]\uplus \VCSP[\Gamma_2]$ but the instance does not have a small backdoor into neither $\VCSP[\Gamma_1]$ nor $\VCSP[\Gamma_2]$.

It is known that backdoors into scattered classes can be used to obtain
fixed-parameter algorithms for \CSP{}, i.e., both finding and using
such backdoors is FPT when dealing with crisp languages of bounded
arity and domain size~\cite{GanianRamanujanSzeider16}. Crucially,
these previous results relied on the fact that every crisp language of
bounded arity and domain size is finite (which is not true for valued
constraint languages in general). We formalize this below.

\begin{theorem}[{\cite[Lemma 1.1]{GanianRamanujanSzeider16}}]
\label{thm:scatcsp}
Let $\Gamma_1,\dots,\Gamma_\ell$ be globally tractable conservative crisp languages over a domain $D$, with each language having arity at most $q$ and containing at most $p$ relations. There exists a function $f$ and an algorithm solving {\VCSP} in time $\bigoh^*(f(\ell,|D|,q,k,p))$, where 
$k$ is the size of a minimum backdoor into $\VCSP[\Gamma_1] \uplus \dots \uplus \VCSP[\Gamma_\ell]$.

\end{theorem}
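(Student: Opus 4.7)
The plan is a two-phase construction, separating local constraint-level branching from global component-level separation. In the first phase, I would repeatedly find a constraint $c$ whose cost function lies in no $\Gamma_i$ and branch on adding one of the at most $q$ variables of $\var(c)$ to the backdoor-in-construction. This is essentially the branching of Lemma~\ref{lem:finding} applied to the union of the languages, and produces, in time $\bigoh^*((\ell q)^k)$, a candidate set $B_1$ of size at most $k$ such that for every $\sigma : B_1 \to D$, each surviving constraint of $\cP|_\sigma$ belongs to some $\Gamma_{i(c,\sigma)}$. Because each $\Gamma_i$ contains at most $p$ relations, efficient recognizability is automatic, which justifies doing the membership tests inside the branching.

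The second phase addresses the harder global requirement: within each connected component of $\cP|_\sigma$, all surviving constraints must belong to a \emph{common} $\Gamma_i$. For each $\sigma$ and each constraint $c$ let $L(c,\sigma) \subseteq [\ell]$ be the set of languages containing $c$; the reduced instance lies in the scattered class iff for each component $K$ of $\cP|_\sigma$ we have $\bigcap_{c \in K} L(c,\sigma) \neq \emptyset$. Thus the remaining task is to identify a further set $B_2$ with $|B_1 \cup B_2| \leq k$ whose deletion breaks each ``mixed'' component into label-compatible pieces \emph{for every} assignment $\sigma$ simultaneously.

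To solve this assignment-dependent separation problem I would use iterative compression: assuming inductively a backdoor $B'$ of size $k+1$, search for one of size $k$. After guessing which subset of $B'$ remains in the new backdoor and committing to a ``type'' map which assigns to each predicted component a target language index $i \in [\ell]$ together with the intended behaviour of the boundary, the problem reduces to at most $(\ell \cdot |D| \cdot p \cdot q)^{\bigoh(k)}$ instances of a coloured multiway-cut problem in the primal graph of $\cP$. Each such cut instance can be solved in FPT time using the important-separator machinery, and a careful accounting of types together with the bound $2^{\ell p}$ on the number of distinct label sets $L(c,\sigma)$ yields the claimed running time $\bigoh^*(f(\ell,|D|,q,k,p))$. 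Conservativeness of the $\Gamma_i$ is used here to guarantee that pinning a variable via a unary constraint keeps the surviving instance inside the language, so that any superset of a valid backdoor remains valid and iterative compression is well-defined.

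The main obstacle will be the interplay between the $|D|^{|B_1|}$ different assignments $\sigma$ and the separation problem, since which pairs of constraints must be disconnected genuinely depends on $\sigma$. The type-guessing step is what neutralises this coupling: by committing upfront to a single language index for each predicted future component, the problem decouples across $\sigma$ so that a single multiway-cut call certifies correctness for all assignments at once. Making this rigorous, and verifying that the enumerated types are both producible in FPT time and sufficient to witness the backdoor property simultaneously for every $\sigma : (B_1 \cup B_2) \to D$, is the delicate technical heart of the argument.
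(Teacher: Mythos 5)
First, note that the paper does not actually prove Theorem~\ref{thm:scatcsp}: it is imported verbatim, as a black box, from \cite[Lemma~1.1]{GanianRamanujanSzeider16}, and the present paper's contribution is to reduce the general valued case to this crisp case via Lemmas~\ref{lem:finitation} and~\ref{lem:vcsptocsp}. So there is no in-paper proof to compare yours against; what you have attempted is a reconstruction of the main result of the cited SODA paper.

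As a reconstruction, your plan identifies the right two difficulties and plausible tools: a branching phase enforcing the local condition that every surviving cost function lies in some $\Gamma_i$ (essentially Lemma~\ref{lem:finding}), followed by an assignment-dependent separation phase attacked via iterative compression, type guessing, and separator machinery. But as a proof it has a genuine gap exactly where you flag one: the claim that, after committing to a ``type'' for each predicted component, a single FPT cut computation certifies the backdoor property simultaneously for all $|D|^{|B_1\cup B_2|}$ assignments is asserted, not established. You neither define what a type records and bound the number of types by a function of $\ell,|D|,q,k,p$, nor show that an optimal cut can be assumed to be (or replaced by) an important separator without destroying validity for some other assignment $\sigma$ --- and this simultaneous-separation argument is precisely the hard content of the cited lemma; everything before it is routine. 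Two smaller slips: a constraint lying in no $\Gamma_i$ may have arity exceeding $q$, so one must branch on a $(q+1)$-sized subset of its scope (as in Lemma~\ref{lem:finding}) rather than on ``the at most $q$ variables of $\var(c)$''; and the iterative-compression setup needs the unstated monotonicity that supersets of scattered backdoors remain backdoors, which is where conservativity/closure under partial assignments must be invoked. In short: the strategy is in the right spirit, but the theorem's technical core is still missing, which is consistent with the paper's choice to cite rather than reprove it.
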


Observe that in the above theorem, when $q$ and $|D|$ are bounded, $p$ is immediately bounded. However, it is important that we formulate the running time of the algorithm in this form because in the course of our application, these parameters have to be bounded separately.
Our goal for the remainder of this section is to extend Theorem~\ref{thm:scatcsp} in the {\VCSP} setting to also cover infinite globally tractable languages (of bounded arity and domain size). Before proceeding, it will be useful to observe that if each $\Gamma_1,\dots,\Gamma_\ell$ is globally tractable, then the class $\VCSP[\Gamma_1] \uplus \dots \uplus \VCSP[\Gamma_\ell]$ is also tractable (since each connected component can be resolved independently of the others).

\subsection{Finding Backdoors to Scattered Classes}
 \label{sub:findingscat}

In this subsection, we prove that finding backdoors for $\VCSP$ into scattered classes is fixed-parameter tractable. This will then allow us to give a proof of our main theorem, stated below.

   \begin{restatable}{theorem}{finaltheorem}
 \label{thm:vcsp}
Let $\Delta_1,\dots,\Delta_\ell$ be conservative, globally tractable and efficiently recognizable languages over a finite domain and having constant arity. Then $\VCSP$ is fixed-parameter tractable parameterized by the size of a smallest backdoor of the given instance into $\VCSP{(\Delta_1)} \uplus \cdots \uplus \VCSP{(\Delta_\ell)}$.
 \end{restatable}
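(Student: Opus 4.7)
The plan is to follow the two-phase reduction outlined in the introduction: reduce the scattered \VCSP{} backdoor detection problem to the scattered \CSP{} backdoor detection problem addressed by Theorem~\ref{thm:scatcsp}, then combine the resulting backdoor with a scattered-class analogue of Lemma~\ref{lem:using}. Throughout, we iterate over $k = 1, 2, \ldots$, so no prior knowledge of the backdoor size is required.

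For the first phase, given $\cP = (V, D, {\cal C})$, I collect into a set $F$ all cost functions of the form $c|_\tau$ where $c \in {\cal C}$ and $\tau$ is a partial assignment on a subset of $\var(c)$. Bounded arity and finite domain ensure $|F|$ is polynomial in $|\cP|$ and that $F$ can be listed explicitly; using efficient recognizability of each $\Delta_i$, I then set $\Gamma'_i := F \cap \Delta_i$, a finite, efficiently recognizable subset of $\Delta_i$, which is itself globally tractable because $\Delta_i$ is. Since every cost function appearing in any reduction $\cP|_\sigma$ lies in $F$, a set $X \subseteq V$ is a backdoor of $\cP$ into $\VCSP[\Delta_1] \uplus \cdots \uplus \VCSP[\Delta_\ell]$ iff it is a backdoor of $\cP$ into $\VCSP[\Gamma'_1] \uplus \cdots \uplus \VCSP[\Gamma'_\ell]$.

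For the second phase I associate to each $\phi \in \bigcup_i \Gamma'_i$ a \emph{type} $t(\phi)$ that encodes, for every partial assignment $\sigma$ on a subset of $\var(\phi)$, the set $\{i : \phi|_\sigma \in \Gamma'_i\}$. Bounded arity and finite domain bound the number of distinct types by a function of $\ell$, $q$, $d$. I then pick, for each type $t$, a distinct crisp relation $R_t$ over a (possibly enlarged) domain $D''$ in a manner that is compatible with partial reductions, that is, $R_{t(\phi)}|_\sigma = R_{t(\phi|_\sigma)}$; a concrete way is to take $R_t$ to be the feasibility relation of a representative cost function of type $t$, tagged with type-identifying tuples over fresh domain elements. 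Replacing each constraint $(\vec{x}, \phi) \in {\cal C}$ by $(\vec{x}, R_{t(\phi)})$ yields a CSP instance $\cP''$ on the same variable set with identical scopes, and taking $\Gamma''_i := \{R_{t(\phi)} : \phi \in \Gamma'_i\}$ augmented by all unary crisp relations over $D''$ gives finite, efficiently recognizable, conservative crisp languages whose global tractability is inherited from that of $\Gamma'_i$ through the feasibility bijection. Commutativity of $R_{t(\cdot)}$ with reductions then implies $X$ is a backdoor of $\cP$ into the $\Gamma'$-scattered class iff $X$ is a backdoor of $\cP''$ into $\CSP[\Gamma''_1] \uplus \cdots \uplus \CSP[\Gamma''_\ell]$.

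Invoking Theorem~\ref{thm:scatcsp} on $\cP''$ then provides an FPT algorithm which produces a backdoor of $\cP''$, and hence of $\cP$, of size at most $k$ whenever one exists. I finally use the backdoor via a straightforward scattered-class analogue of Lemma~\ref{lem:using}: enumerate all $|D|^k$ assignments to $X$, and for each resulting instance decompose into connected components and solve each using the polynomial-time algorithm for the relevant globally tractable $\VCSP[\Delta_i]$. The hard part is Phase~2: selecting the crisp representatives $R_t$ so that the encoding commutes with partial reductions at every level of the type hierarchy while simultaneously yielding conservative and globally tractable crisp languages $\Gamma''_i$ of bounded arity; this requires a careful, possibly domain-extending construction that distinguishes cost functions sharing the same feasibility pattern but contributing differently to the backdoor invariants. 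Everything else reduces to routine bookkeeping and invocation of previously established tools.
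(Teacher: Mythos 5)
Your overall architecture matches the paper's: canonicalize cost functions by a ``type'' capturing language membership under all partial assignments, reduce to a scattered \CSP{} backdoor problem, invoke Theorem~\ref{thm:scatcsp}, and then use the backdoor componentwise. But the step you yourself flag as ``the hard part'' --- constructing the crisp relations $R_t$ --- is precisely where the proof lives, and your sketch of it does not work as stated. Taking $R_t$ to be the feasibility relation of a representative ``tagged with type-identifying tuples over fresh domain elements'' threatens the very commutativity $R_{t(\phi)}|_\sigma = R_{t(\phi|_\sigma)}$ you need: once extra tuples over fresh domain values are inserted, restricting by a partial assignment no longer tracks the restriction of $\phi$, and two cost functions with the same feasibility relation but different cost profiles (hence different behaviour with respect to membership in the $\Gamma'_i$ after further assignments) must be separated by the tags coherently at every level of restriction simultaneously. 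The paper avoids this entirely by encoding the \emph{whole cost table} of $\phi$ as a $(t{+}1)$-ary relation consisting of the tuples $(x_1,\dots,x_t,\phi(x_1,\dots,x_t))$ over a domain enlarged by the set of occurring cost values: the relation then determines $\phi$ exactly, so no tagging is needed and the correspondence with partial assignments on the first $t$ coordinates is immediate. The price is an extra variable per constraint, which the paper neutralizes with a $(k{+}1)$-copies gadget guaranteeing that minimal backdoors avoid the auxiliary variables --- a point you would need an analogue of if your ``domain-extending'' construction ends up introducing auxiliary structure of its own.

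A second concrete problem: Theorem~\ref{thm:scatcsp} requires the crisp languages to be \emph{conservative and globally tractable}, and you obtain conservativity by ``augmenting by all unary crisp relations over $D''$\,''. Adding relations to a language can destroy tractability --- conservative tractability is a genuinely stronger hypothesis, not something one can bolt on --- and you give no argument that the tagged languages $\Gamma''_i$ (with or without the added unary relations) admit a polynomial-time algorithm; ``inherited through the feasibility bijection'' is not such an argument, since the tags are not part of the feasibility relation of any $\Gamma'_i$ cost function. By contrast, your Phase~1 is essentially fine: restricting to the cost functions actually reachable from $\cP$ by partial assignments correctly preserves backdoors, and although your intermediate $\Gamma'_i$ have size polynomial in the instance rather than bounded in the parameter as in the paper's Lemma~\ref{lem:finitation}, the final type-indexed languages are parameter-bounded, which is what Theorem~\ref{thm:scatcsp} needs. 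The final ``using'' step is routine. The gap is entirely in actually realizing Phase~2.
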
  
 
Recall that the closure of a conservative and globally tractable
language under partial assignments is also a globally tractable
language. Furthermore, every backdoor of the given instance into $\VCSP{(\Delta_1)} \uplus \cdots \uplus \VCSP{(\Delta_\ell)}$ is also a backdoor into $\VCSP{(\Gamma_1)} \uplus \cdots \uplus \VCSP{(\Gamma_\ell)}$ where $\Gamma_i$ is the closure of $\Delta_i$ under partial assignments. Due to Lemma \ref{lem:using}, it follows that it is sufficient to compute a backdoor of small size into the scattered class $\VCSP{(\Gamma_1)} \uplus \cdots \uplus \VCSP{(\Gamma_\ell)}$ where each $\Gamma_i$ is closed under partial assignments.

Our strategy for finding backdoors to scattered classes defined in terms of (potentially infinite) globally tractable languages relies on a two-phase transformation of the input instance. 
In the first phase (Lemma \ref{lem:finitation}), we show that for every choice of $\Gamma_1,\dots,\Gamma_d$ (each having bounded domain size and arity), we can construct a set of finite languages $\Gamma'_1,\dots,\Gamma'_d$ and a new instance $\PP'$ such that there is a one-to-one correspondence between backdoors of~$\PP$ into $\Gamma_1\uplus \dots \uplus \Gamma_d$ and backdoors of~$\PP'$ into $\Gamma'_1\uplus \dots \uplus \Gamma'_d$. This allows us to restrict ourselves to only the case of finite (but not necessarily crisp) languages as far as backdoor detection is concerned. In the second phase (Lemma \ref{lem:vcsptocsp}), we transform the instance and languages one more time to obtain another instance $\PP''$ along with finite crisp languages $\Gamma''_1,\dots,\Gamma''_d$ such that there is a one-to-one correspondence between the backdoors of~$\PP''$ and backdoors of $\PP'$. We crucially note that the newly constructed instances are equivalent \emph{only} with respect to backdoor detection; there is no correspondence between the solutions of these instances.

Before proceeding, we introduce a natural notion of replacement of valued constraints which is used in our proofs.

\begin{definition}\label{def:replace}
  Let $\cP=(V,D,\CC)$ be a $\VCSP$ instance and let
  $c=(\vec{x},\phi)\in \CC$. Let $\phi'$ be a cost function over $D$
  with the same arity as $\phi$. Then the operation of
  \emph{replacing} $\phi$ in $c$ with $\phi'$ results in a new
  instance
  $\cP'=(V,D,({\CC}\setminus \{c\})\cup
  \{(\vec{x},\phi')\})$. 
\end{definition}

 \begin{lemma}
 \label{lem:finitation}
Let $\Gamma_1,\dots,\Gamma_\ell$ be  efficiently recognizable languages closed under partial assignments, each of arity at most $q$ over a domain $D$ of size $d$. There exists an algorithm which takes as input a $\VCSP$ instance $\PP=(V,D,\cal C)$ and an integer $k$, runs in time $\bigoh^*(f(\ell,d,k,q))$ for some function $f$ and either correctly concludes that $\cP$ has no backdoor into $\VCSP(\Gamma_1)\uplus \dots \uplus \VCSP(\Gamma_\ell)$ of size at most $k$ or outputs a $\VCSP$ instance $\PP'=(V,D',\cal C')$ and languages  $\Gamma'_1,\dots,\Gamma'_{\ell}$ with the following properties.
\begin{enumerate}
\item For each $i\in [\ell]$, the arity of $\Gamma_i'$ is at most $q$
\item For each $i\in [\ell]$, $\Gamma'_i$ is over $D'$ and $D'\subseteq D$
\item Each of the languages $\Gamma_1',\dots, \Gamma_\ell'$ is closed under partial assignments and contains at most $g(\ell,d,k,q)$ cost functions for some function $g$.
\item For each $X\subseteq V$, $X$ is a minimal backdoor of $\PP$ into $\VCSP(\Gamma_1)\uplus \dots \uplus \VCSP(\Gamma_\ell)$ of size at most $k$ if and only if $X$ is a minimal backdoor of $\PP'$ into $\VCSP(\Gamma'_1)\uplus \dots \uplus \VCSP(\Gamma'_{\ell})$ of size at most $k$. 
\end{enumerate}
 \end{lemma}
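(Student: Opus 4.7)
The plan is to identify when two cost functions behave identically from the standpoint of backdoor detection, and to replace every cost function of $\PP$ by a canonical proxy drawn from a finite pool. First I would define an equivalence relation $\sim$ on cost functions of arity at most $q$ over $D$ by $\phi \sim \phi'$ iff $\phi,\phi'$ have the same arity $m \leq q$ and for every partial assignment $\sigma$ on $[m]$ and every $i \in [\ell]$, $\phi|_\sigma \in \Gamma_i \Leftrightarrow \phi'|_\sigma \in \Gamma_i$. Since there are at most $(d+1)^q$ such $\sigma$ and each contributes a subset of $[\ell]$ as its ``type,'' the number of $\sim$-classes is bounded by a computable function $h(\ell,d,q)$. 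By quantifying over further restrictions of any $\tau$, one checks that $\sim$ is compatible with partial assignments: $\phi \sim \phi'$ implies $\phi|_\tau \sim \phi'|_\tau$.

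Next I would build the replacement. Compute $[\phi]$ for every cost function $\phi$ in $\PP$ by running the efficient recognizer of each $\Gamma_i$ on every one of the at most $(d+1)^q$ restrictions $\phi|_\sigma$; for each class $C$ encountered, fix an arbitrary representative $\phi_C \in C$. Let $\mathcal{R}$ be the set of such representatives and let $\mathcal{R}^*$ be its closure under partial assignments; both are finite with size bounded by some $g(\ell,d,q)$. Form $\PP'$ by replacing each constraint $(\vec{x},\phi)$ of $\PP$ with $(\vec{x},\phi_{[\phi]})$ in the sense of Definition~\ref{def:replace}, keep $D' = D$, and set $\Gamma_i' := \mathcal{R}^* \cap \Gamma_i$. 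Note that $k$ does not enter this construction, so the algorithm always outputs $(\PP', \Gamma_1',\dots,\Gamma_\ell')$ and never needs to trigger the ``no backdoor'' branch.

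Properties (1) and (2) are immediate from the construction. For (3), $\Gamma_i'$ has size at most $|\mathcal{R}^*| \le g(\ell,d,q)$, and closure under partial assignments follows because $\phi \in \Gamma_i'$ forces $\phi|_\sigma \in \mathcal{R}^*$ (by construction of $\mathcal{R}^*$) and $\phi|_\sigma \in \Gamma_i$ (since $\Gamma_i$ is itself closed under partial assignments). For (4), since $\PP$ and $\PP'$ share the same variable set and the same tuple of scopes, for any $X \subseteq V$ and any $\tau : X \to D$, the instances $\PP|_\tau$ and $\PP'|_\tau$ share the same constraint hypergraph and hence the same connected components. It suffices to verify that for each constraint $c = (\vec{x}, \phi)$ and the induced partial assignment $\sigma = \tau|_{\var(c)\cap X}$, $\phi|_\sigma \in \Gamma_i \Leftrightarrow \phi_{[\phi]}|_\sigma \in \Gamma_i'$; one direction combines $\phi \sim \phi_{[\phi]}$ with the definition of $\Gamma_i'$, and the other uses the same two facts in reverse. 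This yields the desired bijection of backdoor sets of size at most $k$, and the minimality clause follows because the set of backdoors of every size is preserved.

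The main obstacle I expect is ensuring that the canonical pool remains closed under partial assignments while staying finite; otherwise $\Gamma_i'$ would fail to be closed under partial assignments, breaking the hypothesis required by the follow-up Lemma~\ref{lem:vcsptocsp}. The resolution is the explicit closure step $\mathcal{R} \mapsto \mathcal{R}^*$: since every element of $\mathcal{R}$ has arity at most $q$ and admits at most $(d+1)^q$ partial restrictions, $\mathcal{R}^*$ is still finite. The total running time fits within $\bigoh^*(f(\ell,d,k,q))$, dominated by the class-identification phase which invokes each $\Gamma_i$-recognizer only a bounded number of times per constraint.
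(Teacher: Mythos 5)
Your overall strategy --- classify each cost function of $\cP$ by the ``type'' recording, for every partial assignment of its argument positions and every $i\in[\ell]$, whether the restricted function lies in $\Gamma_i$; replace each function by a canonical representative of its class; and take $\Gamma_i'$ to be the (closure under partial assignments of the) representatives that land in $\Gamma_i$ --- is exactly the paper's construction, and your verification of properties (1)--(4) via the preserved constraint hypergraph and the type-equivalence is sound as far as it goes. However, there is a genuine gap: you define the equivalence relation $\sim$ only on cost functions \emph{of arity at most $q$}, whereas the input $\cP$ is an arbitrary $\VCSP$ instance and may contain valued constraints of arbitrarily large arity. Such a constraint belongs to no $\Gamma_i$ as given, but after an assignment to backdoor variables its arity can drop to at most $q$ and it may then enter some $\Gamma_i$; these constraints are precisely the interesting ones for backdoor detection, and your construction does not assign them a class at all. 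Consequently your claim that ``$k$ does not enter this construction'' and that the algorithm ``never needs to trigger the no-backdoor branch'' is wrong: a constraint of arity at least $q+k+1$ can never be reduced to arity at most $q$ by assigning at most $k$ variables, so its presence forces the answer \emph{no backdoor of size at most $k$} --- this is the only way the lemma's rejection branch can be realized, and it must be realized, since otherwise property (4) would be violated for such instances (where $\cP$ has no small backdoor but your $\cP'$, built only from low-arity proxies, might).

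The repair is the one the paper uses: first reject if any constraint has arity exceeding $q+k$, then define the types over arities up to $q+k$ (so the type of $c$ records, for each subset $Q$ of its at most $q+k$ positions and each assignment $\gamma$ to $Q$, the set of $i$ with $c|_{\gamma}\in\VCSP(\Gamma_i)$). This makes the number of classes, and hence the bound $g$, depend on $k$ as well as on $\ell,d,q$ --- which is why the lemma states $g(\ell,d,k,q)$ rather than $g(\ell,d,q)$. With that modification the rest of your argument (the closure step $\mathcal{R}\mapsto\mathcal{R}^*$, the per-constraint equivalence $\phi|_\sigma\in\Gamma_i\Leftrightarrow\phi_{[\phi]}|_\sigma\in\Gamma_i'$, and the component-wise transfer of backdoors) goes through and coincides with the paper's proof.
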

 
 \begin{proof} We will first define a function mapping the valued constraints in $\CC$ to a finite set whose size depends only on $\ell,d,k$ and $q$. Subsequently, we will show that every pair of constraints in $\CC$ which are mapped to the same element of this set are, for our purposes (locating a backdoor), interchangeable. We will then use this observation to define the new instance $\cP'$ and the languages $\Gamma'_1,\dots,\Gamma'_{\ell'}$. To begin with, observe that if the arity of a valued constraint in $\cP$ is at least $q+k+1$, then $\cP$ has no backdoor of size at most $k$ into $\VCSP(\Gamma_1)\uplus \dots \uplus \VCSP(\Gamma_\ell)$. Hence, we may assume without loss of generality that the arity of every valued constraint in $\cP$ is at most $q+k$.
 
Let $\cF$ be the set of all functions from $[q+k]\times
2^{[q+k]}\times D^{[q+k]} \to 2^{[\ell]}\cup \{\bot\}$, where $\bot$ is used a special symbol expressing that $\cF$ is ``out of bounds.''
Observe that $|\cF|\leq \eta(\ell,d,k,q)=(2^\ell+1)^{(2d)^{(q+k)+\log (q+k)}}$. We will now define a function $\og:\CC\to \cF$ as follows. We assume without loss of generality that the variables in the scope of each constraint in $\CC$ are  numbered  from $1$ to $|\var(c)|$ based on their occurrence in the tuple $\vec{x}$ where $c=(\vec{x},\phi)$. Furthermore, recall that  $|\var(c)|\leq q+k$. 
 For $c\in \CC$, we define $\og(c)=\delta\in \cF$ where $\delta$ is defined as follows. Let $r\leq q+k$, $Q\subseteq [q+k]$ and $\gamma:[q+k]\to D$. Let $\gamma[Q\cap [r]]$ denote the restriction of $\gamma$ to the set $Q\cap [r]$. Furthermore, recall that $c|_{\gamma[Q\cap [r]]}$ denotes the valued constraint resulting from applying the partial assignment $\gamma$ on the variables of $c$ corresponding to all those indices in $Q\cap [r]$.

  Then, $\delta(r,Q,\gamma)=\bot$ if $r\neq |\var(c)|$. Otherwise,  $\delta(r,Q,\gamma)=L\subseteq [\ell]$ where $i\in [\ell]$ is in $L$ if and only if $c|_{\gamma[Q\cap [r]]}\in \VCSP({\Gamma_i})$. This completes the description of the function $\og$; observe that $\og(c)$ can be computed in time which is upper-bounded by a function of $\ell, d, k, q$.
 
 For every $\delta\in \cF$, if there is a valued constraint $c\in \CC$ such that $\og(c)=\delta$, we pick and fix one arbitrary such valued constraint $c^*_\delta=(\vec{x^*_\delta},\phi^*_\delta)$. We now proceed to the definition of the instance $\cP'$ and the languages $\Gamma'_1,\dots, \Gamma'_{\ell'}$.
 
 Observe that for 2 constraints $c=(\vec{x_1},\phi),c'=(\vec{x'},\phi')\in \CC$, if $\og(c)=\og(c')$ then $|\var(c)|=|\var(c')|$. Hence, the notion of \emph{replacing} $\phi$ in $c$ with $\phi'$ is well-defined (see Definition \ref{def:replace}). We define the instance $\cP'$ as the instance obtained from $\cP$ by replacing each $c=(\vec{x},\phi)\in \CC$ with the constraint $(\vec{x},\phi^*_\delta)$ where $\delta=\og(c)$.
 
For each $i\in [\ell]$ and cost function $\phi\in  \Gamma_i$,  we add $\phi$ to the language $\Gamma_i'$ if and only if  for some $\delta\in \cF$ and some set $Q\subseteq \var(c^*_\delta)$ and assignment $\gamma:Q\to D$, the constraint $c|_{\gamma[Q]}=(\vec{x}\setminus Q,\phi)$.  
 Clearly, for every $i\in [\ell]$, $|\Gamma_i'|\leq d^q\cdot |\cF|\leq d^q\cdot \eta(\ell,d,k,q)$. Finally, for each $\Gamma_i'$, we compute the closure of $\Gamma_i'$ under partial assignments and add each relation from this closure into $\Gamma_i'$. Since the size of each $\Gamma_i'$ is bounded initially in terms of $\ell,d,k,q$, computing this closure can be done in time $\bigoh^*(\lambda(\ell,d,k,q))$ for some function $\lambda$.
Since each cost function has arity $q$ and domain $D$, the size of the final language $\Gamma_i'$ obtained after this operation is blown up by a factor of at most $d^q$, implying that in the end, $|\Gamma_i'|\leq d^{2q}\cdot |\cF|\leq d^{2q}\cdot \eta(\ell,d,k,q)$.   

Now, observe that the first two statements of the lemma follow from the definition of the languages $\{\Gamma_i'\}_{i\in [\ell]}$. Furthermore, the number of cost functions in each $\Gamma_i'$ is bounded by $d^q\cdot \eta(\ell,d,k,q)$, and so the third statement holds as well.
  Therefore, it only remains to prove the final statement of the lemma. Before we do so, we state a straightforward consequence of the definition of~$\cP'$.
  
  \begin{observation}\label{obs:bijection}
  	For every $Y\subseteq V$, $\gamma:Y\to D$ and connected component $\cH'$ of $\cP'|_\gamma$, there is a connected component $\cH$ of $\cP|_\gamma$ and a bijection $\psi:\cH\to\cH'$ such that for every $c\in \cH$, $\og(c)=\og(\psi(c))$. Furthermore, for every $c=(\vec{x},\phi)\in \cH$, the constraint $\psi(c)$ is obtained by replacing $\phi$ in $c$ with $\phi^*_{\og(c)}$.
  \end{observation}

We now return to the proof of Lemma~\ref{lem:finitation}.
Consider the forward direction and let $X$ be a  backdoor of size at most $k$ for $\cP$ into $\VCSP(\Gamma_1)\uplus \dots \uplus \VCSP(\Gamma_\ell)$ and suppose that $X$ is \emph{not} a backdoor for $\cP'$ into $\VCSP(\Gamma'_1)\uplus \dots \uplus \VCSP(\Gamma_\ell')$. Then, there is an assignment $\gamma:X\to D$ such that for some connected component $\cH$  of $\cP'|_\gamma$, there is no $i\in  \ell$ such that all constraints in $\cH'$ lie in $\VCSP(\Gamma'_i)$. 
By Observation \ref{obs:bijection} above, there is a connected component $\cH$ in $\cP|_\gamma$ and a bijection $\psi:\cH\to\cH'$ such that for every $c\in \cH$, $\og(c)=\og(\psi(c))$.  Since $X$ is a backdoor for $\cP$, there is a $j\in \ell$ such that all constraints in $\cH$ lie in $\VCSP(\Gamma_j)$. Pick an arbitrary constraint $c=(\vec{x},\phi)\in \cH$. Let $c'=(\vec{x},\phi^*_{\og(c)})$ be the constraint $\psi(c)$. By definition of $\phi^*_{\og(c)}$ it follows that $c'|_\gamma\in  \VCSP(\Gamma_j)$. The fact that this holds for an arbitrary constraint in $\cH$ along with the fact that $\psi$ is a bijection implies that every constraint in $\cH'$ is in fact in $\VCSP(\Gamma_j')$, a contradiction. The argument in the converse direction is symmetric. This completes the proof of the final statement of the lemma.

The time taken to compute $\cP'$ and the languages $\Gamma_1',\dots, \Gamma_\ell'$ is dominated by the time required to compute the function $\og$. Since the languages $\Gamma_1,\dots, \Gamma_\ell$ are efficiently recognizable,  this time is bounded by $\bigoh^*(|\cF|)$, completing the proof of the lemma.
 \end{proof}

 \begin{lemma}
 \label{lem:vcsptocsp}
 Let $\Gamma_1,\dots,\Gamma_\ell$ be  efficiently recognizable languages closed under partial assignments, each of arity at most $q$ over a domain $D$ of size $d$. 
 Let $\cP'=(V,D',\CC')$ be the $\VCSP$ instance and let $\Gamma_1',\dots,\Gamma_\ell'$ be languages returned by the algorithm of Lemma \ref{lem:finitation} on input $\cP$ and $k$. There exists an algorithm which takes as input $\cP'$, these languages and $k$, runs in time $\bigoh^*(f(\ell,d,k,q))$ for some function $f$ and outputs a \CSP{} instance $\PP''=(V''\supseteq V,D'',\cal C'')$ and crisp languages $\Gamma_1'',\dots, \Gamma_\ell''$ with the following properties.
 \begin{enumerate}
 \item  For each $i\in [\ell]$, the arity of $\Gamma_i''$ is at most $q+1$
 \item $D''\supseteq D$ and $|D''|\leq \beta(q,d,k)$ for some function $\beta$.
 \item  The number of relations in each of the languages $\Gamma_1'',\dots, \Gamma_\ell''$ is at most $\alpha(q,d,k)$ for some function $\alpha$.
 	\item if $X$ is a minimal backdoor of arity at most $k$ of $\PP''$ into $\CSP(\Gamma''_1)\uplus \dots \uplus \CSP(\Gamma''_{\ell})$, then $X\subseteq V$. 
 	 	\item For each $X\subseteq V$, $X$ is a minimal backdoor of $\PP'$ into $\VCSP(\Gamma_1')\uplus \dots \uplus \VCSP(\Gamma_\ell')$ if and only if $X$ is a minimal backdoor of $\PP''$ into $\CSP(\Gamma''_1)\uplus \dots \uplus \CSP(\Gamma''_{\ell})$.
 \end{enumerate}  
 \end{lemma}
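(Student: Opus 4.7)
The plan is to transform $\cP'$ into a crisp CSP instance $\cP''$ by replacing every cost function $\phi$ of arity $m$ with its \emph{graph relation} $R_\phi = \{(\vec{a},\phi(\vec{a})) : \vec{a} \in D^m\}$ of arity $m+1$, where the extra coordinate carries the cost value. To prevent the newly introduced auxiliary variables from ever being useful inside backdoors of size at most $k$---which is what will ultimately force Property 4---each original constraint is encoded by $k+1$ parallel crisp copies using distinct fresh auxiliary variables.

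Concretely, let $S$ be the (finite) set of all cost values appearing in any cost function of $\cP'$ or in any $\Gamma_i'$, and set $D'' = D \cup S$ (renaming if needed so that $D \cap S = \emptyset$). For each constraint $c = (\vec{x},\phi) \in \cP'$ and each $j \in [k+1]$, introduce a fresh auxiliary variable $y_c^{(j)}$ and add to $\cP''$ the crisp constraint $((\vec{x},y_c^{(j)}), R_\phi)$. For each $i \in [\ell]$, define $\Gamma_i''$ as the closure of $\{R_\phi : \phi \in \Gamma_i'\}$ under CSP-style partial assignments. Properties 1--3 are then immediate: each $R_\phi$ has arity at most $q+1$; $|D''| \le d + |S|$ and $|S|$ is bounded using the finiteness of the languages provided by Lemma~\ref{lem:finitation}; and the closure of a bounded finite set of relations of arity $q+1$ over a bounded-size domain contains only boundedly many relations.

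The main obstacle is Property 4. Suppose for contradiction that $X$ is a minimal backdoor of $\cP''$ of size at most $k$ containing some $y_c^{(1)}$. Because $|X| \le k < k+1$, some index $j^\star \in [k+1]$ satisfies $y_c^{(j^\star)} \notin X$. By minimality of $X$, the set $X \setminus \{y_c^{(1)}\}$ is not a backdoor, so there exist an assignment $\gamma' : X \setminus \{y_c^{(1)}\} \to D''$ and a connected component $\cH$ of $\cP''|_{\gamma'}$ that lies in no $\CSP(\Gamma_i'')$. The component $\cH$ must contain $y_c^{(1)}$, otherwise $\cH$ would also be a component of $\cP''|_\gamma$ for every extension $\gamma$ of $\gamma'$ to $X$, contradicting that $X$ is a backdoor. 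If $\var(c) \subseteq X$, then $\cH = \{y_c^{(1)}\}$ with a single unary constraint whose relation is either $\emptyset$ or a singleton $\{v\}$ with $v \in S$, both of which lie in every $\Gamma_i''$ (the former as a trivial partial-assignment closure, the latter as $R_\psi$ for the nullary cost function $\psi = v \in \Gamma_i'$), a contradiction. Otherwise, let $\cH_c = \cH \setminus \{y_c^{(1)}\}$, which appears as a component of $\cP''|_\gamma$ for any extension $\gamma$ of $\gamma'$; fix $i$ so that $\cH_c \in \CSP(\Gamma_i'')$, and note that since $y_c^{(j^\star)} \notin X$, the sibling constraint $((\vec{x''}, y_c^{(j^\star)}), R_{\phi|_{\gamma_V}})$ sits inside $\cH_c$, forcing $R_{\phi|_{\gamma_V}} \in \Gamma_i''$, where $\vec{x''} = \vec{x} \setminus X$ and $\gamma_V = \gamma'|_{X \cap V}$. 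But the constraint of $\cH$ involving $y_c^{(1)}$ uses exactly the same relation $R_{\phi|_{\gamma_V}}$, and every other constraint of $\cH$ coincides with one of $\cH_c$; hence $\cH \in \CSP(\Gamma_i'')$, the desired contradiction.

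For Property 5, the forward direction is direct: any $\gamma : X \to D''$ either agrees with a VCSP assignment $X \to D$, in which case each component of $\cP'|_\gamma$ lies in some $\VCSP(\Gamma_i')$ and the corresponding augmented component of $\cP''|_\gamma$ uses only relations of the form $R_{\phi|_\gamma} \in \Gamma_i''$, or $\gamma$ sends some variable to an element of $S$, in which case every affected constraint reduces to an empty relation and thus lies in every $\Gamma_i''$. The converse rests on a bijection between the \emph{$\phi$-shaped} relations in $\Gamma_i''$---those of the form $\{(\vec{a},\psi(\vec{a})) : \vec{a} \in D^{m'}\}$---and the cost functions of $\Gamma_i'$: because $\Gamma_i'$ is already closed under partial assignments, any $\phi$-shaped relation produced by CSP-style partial-assignment closure of $\{R_\psi : \psi \in \Gamma_i'\}$ is again of the form $R_{\psi|_\sigma}$ for some $\psi|_\sigma \in \Gamma_i'$, so $R_{\phi|_\gamma} \in \Gamma_i''$ forces $\phi|_\gamma \in \Gamma_i'$. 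Minimality transfers trivially, since every proper subset of $X \subseteq V$ is already in $V$ and the backdoor correspondence applies to every such subset.
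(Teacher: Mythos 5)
Your construction is essentially the paper's: you encode each cost function by its cost-table relation with one extra value-carrying coordinate, duplicate each constraint $k+1$ times with fresh auxiliary variables to force minimal backdoors back into $V$, close the resulting crisp languages under partial assignments, and use the disjointness of $D$ and the value set to handle out-of-domain assignments (via empty relations) and the converse of Property~5. The only cosmetic differences are that you absorb all cost values occurring in $\cP'$ directly into $D''$, whereas the paper keeps only the languages' values and collapses the rest to a single fresh symbol $\epsilon$, and that your Property~4 argument is the paper's argument rephrased as a proof by contradiction; if anything, your explicit treatment of why a ``graph-shaped'' relation in $\Gamma_i''$ must come from a cost function in $\Gamma_i'$ fills in a step the paper leaves implicit.
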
 
 
\begin{proof}
We propose a fixed-parameter algorithm $\AA$, and show that it has the claimed properties. It will be useful to recall that we do not distinguish between crisp cost functions and relations. We also formally assume that $D'$ does not intersect the set of rationals $\cal Q$; if this is not the case, then we simply rename elements of $D'$ to make sure that this holds. Within the proof, we will use $\vec{a}\circ b$ to denote the concatenation of vector $\vec{a}$ by element $b$.
 
First, let $T_i$ be the set of all values which are returned by at least one cost function from $\Gamma'_i$, $i\in [\ell]$, for at least one input. Let $T=\bigcup_{i\in [\ell]} T_i$. Observe that $|T|$ is upper-bounded by the size, domain and arity of our languages. Let us now set $D''=D'\cup T\cup \epsilon$. Intuitively, our goal will be to represent the cost function in each valued constraint in $\cP'$ by a crisp cost function with one additional variable which ranges over $T$, where $T$ corresponds to a specific value which occurs in one of our base languages. Note that this satisfies Condition~$2.$ of the lemma, and that $T$ can be computed in linear time from the cost tables of $\Gamma'_1,\dots,\Gamma'_\ell$. We will later construct $k+1$ such representations (each with its own additional variable) to ensure that the additional variables are never selected by minimal backdoors.
 
Next, for each language $\Gamma_i'$, $i\in [\ell]$, we compute a new crisp language $\Gamma_i''$ as follows. For each $\phi\in \Gamma_i'$ of arity $t$, we add a new relation $\psi$ of arity $t+1$ into $\Gamma_i''$, and for each tuple $(x_1,\dots,x_t)$ of elements from $D'$ we add the tuple $(x_1,\dots,x_t,\phi(x_1,\dots,x_t))$ into $\psi$; observe that this relation exactly corresponds to the cost table of $\phi$. 
We then compute the closure of $\Gamma_i''$ under partial assignments and add each relation from this closure into $\Gamma_i''$.
Observe that the number of relations in $\Gamma_i''$ is bounded by a function of $|T|$ and $|\Gamma_i'|$, and furthermore the number of tuples in each relation is upper-bounded by $q^{|D'|}$, and so Conditions~$1.$ and~$3.$ of the lemma hold. The construction of each $\Gamma_i''$ from $\Gamma_i'$ can also be done in linear time from the cost tables of $\Gamma'_1,\dots,\Gamma'_\ell$.

Finally, we construct a new instance $\PP''=(V'',D'',\cal C'')$ from $\PP'=(V,D',\cal C')$ as follows. At the beginning, we set $V'':=V$. For each $c'=(\vec{x}',\phi')\in \CC'$, we add $k+1$ unique new variables $v^1_{c'},\dots,v^{k+1}_{c'}$ into $V''$ and add $k+1$ constraints $c''^{1},\dots,c''^{k+1}$ into $\CC''$. For $i\in [k+1]$, each $c''^{i}=(\vec{x}' \circ v^i_{c'},\psi'')$ where $\psi''$ is a relation that is constructed similarly as the relations in our new languages $\Gamma_i''$ above. Specifically, for each tuple $(x_1,\dots,x_t)$ of elements from $D'$ we add the tuple $(x_1,\dots,x_t,\phi'(x_1,\dots,x_t))$ into $\psi''$, modulo the following exception. If $\phi'(x_1,\dots,x_t)\not \in D''$, then we instead add the tuple $(x_1,\dots,x_t,\epsilon)$ into $\psi''$.
Clearly, the construction of our new instance $\PP''$ takes time at most $\bigoh(|{\cal C'}|+q^{|D'|})$. This concludes the description of $\AA$.

It remains to argue that Conditions~$4.$ and~$5.$ of the lemma hold. 
First, consider a minimal backdoor $X$ of size at most $k$ of $\PP''$ into $\CSP[\Gamma''_1]\uplus \dots \uplus \CSP[\Gamma''_{\ell}]$, and assume for a contradiction that there exists some $c'=(\vec{x}',\phi')\in \CC'$ and $i\in [k+1]$ such that $v^i_{c'}\in X$. First, observe that this cannot happen if the whole scope of $c''^{i}$ lies in $X$.
By the size bound on $X$, there exists $j\in [k+1]$ such that $v^j_{c'}\not \in X$. 
Then for each partial assignment $\tau$ of $X$, the relation $\phi''$ in $c''^j$ belongs to the same globally tractable language as the rest of the connected component of $\cP''$ containing the scope of $c''$ (after applying $\tau$). Since the relation $\phi''$ in $c''^j$ is precisely the same as in $c''^i$ and the scope of $c''^i$ must lie in the same connected component as that of $c''^j$, it follows that $X\setminus \{v_{x'}^i\}$ is also a backdoor of $\PP''$ into $\CSP(\Gamma''_1)\uplus \dots \uplus \CSP(\Gamma''_{\ell})$. However, this contradicts the minimality of $X$.

Finally, for Condition~$5.$, consider an arbitrary backdoor $X$ of $\PP'$ into $\VCSP(\Gamma'_1)\uplus \dots \uplus \VCSP(\Gamma'_{\ell})$, and let us consider an arbitrary assignment from $X$ to $D''$. It will be useful to note that while the contents of relations and/or cost functions in individual (valued) constraints depend on the particular choice of the assignment to $X$, which variables actually occur in individual components depends only on the choice of $X$ and remains the same for arbitrary assignments. 

Now observe that each connected component $\PP^{\text{CSP}}$ of $\PP''$ after the application of the (arbitrarily chosen) assignment will fall into one of the following two cases. $\PP^{\text{CSP}}$ could contain a single variable $v_{c'}$ with a single constraint whose relation lies in every language $\Gamma''_i$, $i\in [\ell]$; this occurs precisely when the whole scope of a valued constraint $c'\in \CC'$ lies in $X$, and the relation will either contain a singleton element from $T$ or be the empty relation. In this case, we immediately conclude that $\PP^{\text{CSP}}\in \CSP(\Gamma_i)$ for each $i\in [\ell]$.

Alternatively, $\PP^{\text{CSP}}$ contains at least one variable $v\in V$. Let $\PP^{\text{VCSP}}$ be the unique connected component of $\PP'$ obtained after the application of an arbitrary assignment from $X$ to $D'$ which contains $v$. Observe that the variable sets of $\PP^{\text{CSP}}$ and $\PP^{\text{VCSP}}$ only differ in the fact that $\PP^{\text{CSP}}$ may contain some of the newly added variables $v_{c'}$ for various constraints $c'$. Now let us consider a concrete assignment $\tau$ from $X$ to $D'$ along with an $i\in [\ell]$ such that after the application of $\tau$, the resulting connected component $\PP^{\text{VCSP}}$ belongs to $\VCSP(\Gamma'_i)$. It follows by our construction that applying the same assignment $\tau$ in $\PP''$ will result in a connected component $\PP^{\text{CSP}}$ corresponding to $\PP^{\text{VCSP}}$ such that $\PP^{\text{VCSP}}\in \CSP(\Gamma''_i)$; indeed, whenever $\Gamma'_i$ contains an arbitrary cost function $\phi(\vec{x})=\beta$, the language $\Gamma''_i$ will contain the relation $(\vec{x}\circ\beta)$. 

By the above, the application of an assignment from $X$ to $D'$ in $\PP''$ will indeed result in an instance in $\CSP(\Gamma_1''\uplus \dots \uplus \Gamma_\ell'')$. But recall that the domain of $\PP''$ is $D''$, which is a superset of $D'$; we need to argue that the above also holds for assignments $\tau$ from $X$ to $D''$. To this end, consider an arbitrary such $\tau$ and let $\tau_0$ be an arbitrary assignment from $X$ to $D'$ which matches $\tau$ on all mappings into $D'$. Let us compare the instances $\PP''_{\tau_0}$ and $\PP''_{\tau}$. By our construction of $\PP'$, whenever $\tau$ maps at least one variable from the scope of some constraint $c''$ to $D''\setminus D'$, the resulting relation will be the empty relation. It follows that each constraint in $\PP''_{\tau}$ will either be the same as in $\PP''_{\tau}$, or will contain the empty relation. But since the empty relation is included in every language $\Gamma_1'',\dots,\Gamma_\ell''$, we conclude that each connected component of $\PP''_\tau$ must belong to at least one language $\Gamma_i''$, $i\in[\ell]$. This shows that $X$ must also be a backdoor of $\PP''$ into $\CSP[\Gamma''_1]\uplus \dots \uplus \CSP[\Gamma''_\ell]$.

For the converse direction, consider a minimal backdoor $X$ of $\PP''$ into $\CSP[\Gamma''_1]\uplus \dots \uplus \CSP[\Gamma''_\ell]$. Since we already know that Condition~$4.$ holds, $X$ must be a subset of~$V$. The argument from the previous case can then simply be reversed to see that $X$ will also be a backdoor of $\PP'$ into $\VCSP[\Gamma'_1]\uplus \dots \uplus \VCSP[\Gamma'_\ell]$; in fact, the situation in this case is much easier since only assignments into $D'$ need to be considered.

Summarizing, we gave a fixed-parameter algorithm and then showed that it satisfies each of the required conditions, and so the proof is complete.   
\end{proof}
We are now ready to prove Theorem \ref{thm:vcsp}, which we restate for the sake of convenience.
 
\finaltheorem*

\begin{proof}
 For each $i\in[\ell]$, let $\Gamma_i$ denote the closure of $\Delta_i$ under partial assignments. Observe that every backdoor of the given instance into $\VCSP{(\Delta_1)} \uplus \cdots \uplus \VCSP{(\Delta_\ell)}$ is also a backdoor into $\VCSP{(\Gamma_1)} \uplus \cdots \uplus \VCSP{(\Gamma_\ell)}$. Furthermore, each $\VCSP{(\Gamma_i)}$ is tractable since $\VCSP{(\Delta_1)}$ is tractable and conservative. Hence, it is sufficient to compute and use a backdoor of size at most $k$ into $\VCSP{(\Gamma_1)} \uplus \cdots \uplus \VCSP{(\Gamma_\ell)}$.
 
  The claimed algorithm has two parts.
The first one is finding a backdoor into $\VCSP{(\Gamma_1)} \uplus \cdots \uplus \VCSP{(\Gamma_\ell)}$ and the second one is using the computed backdoor to solve $\VCSP$.
 Given an instance $\cP$ and $k$, we first execute the algorithm of Lemma \ref{lem:finitation} to compute the instance $\cP'$, and the languages $\Gamma_1',\dots, \Gamma_\ell'$ with the properties stated in the lemma. We then execute the algorithm of Lemma \ref{lem:vcsptocsp} with input $\cP'$, $k$, and $\Gamma_1',\dots, \Gamma_\ell'$ to compute the CSP instance $\cP''$ and crisp languages $\Gamma_1'',\dots, \Gamma_\ell''$ with the stated properties. Following this, we execute the algorithm of Theorem \ref{thm:scatcsp} with input $\cP'',k$. If this algorithm returns {\No} then we return {\No} as well. Otherwise we return the set returned by this algorithm as a backdoor of size at most $k$ for the given instance $\cP$. Finally, we use the algorithm of Lemma \ref{lem:using} with $\cal H$ set to be the class $\VCSP{(\Gamma_1)} \uplus \cdots \uplus \VCSP{(\Gamma_\ell)}$,  to solve the given instance.

 The correctness as well as running time bounds follow from those of
 Lemmas~\ref{lem:finitation} and~\ref{lem:vcsptocsp}, Theorem~\ref{thm:scatcsp}, and Lemma~\ref{lem:using}. This completes the proof of the theorem.
\end{proof}

 \section{Concluding Remarks}
 \label{sec:conclusion}
We have introduced the notion of backdoors to the  VCSP setting as a means for
  augmenting a class of globally tractable VCSP instances to instances that are
  outside the class but of small distance to the class. We have
  presented fixed-parameter tractability results for solving VCSP
  instances parameterized by the size of a smallest backdoor into a
  (possibly scattered and heterogeneous) tractable class satisfying certain natural properties.

   Our work opens up several avenues for future research. 
Since our main objective was to establish the fixed-parameter tractability of this problem, we have not attempted to optimize the runtime bounds for finding backdoors to scattered classes.
As a result, it is quite likely that a more focussed study of scattered classes arising from specific constraint languages will yield a significantly better runtime. A second interesting direction would be studying the parameterized complexity of detection of backdoors into tractable VCSP classes that are
characterized by specific fractional polymorphisms.

\bibliographystyle{abbrv} \bibliography{literature}

\end{document}